\newcommand{\sizecorr}[1]{\makebox[0cm]{\phantom{$\displaystyle #1$}}}
\DeclareMathOperator{\vect}{vec}
\DeclareMathOperator{\trace}{Tr}
\newcommand{\paren}[1]{\left({#1}\right)}
\newcommand{\bracket}[1]{{\left [{#1}\right ]}}
\newcommand{\braces}[1]{{\left\{ {#1}\right\}}} 
\newcommand{\ith}[1]    {{#1}{\text{-th}}}
\newcommand{\rr}{_\mathrm{r}}
\newcommand{\cc}{_\mathrm{c}}
\newcommand{\B}{\textrm{B}}
\newcommand{\rnr}{_{\mathrm{r},n_\mathrm{r}}}
\newcommand{\target}{\mathrm{t}}
\newcommand{\Ernr}{\mathbf{E}_{\textrm{r},n_{\textrm{r}}}}
\newcommand{\EiB}{\mathbf{E}_{\textrm{u},i}\bracket{k}}
\newcommand{\EiBn}{\mathbf{E}_{\textrm{u},i}\bracket{k}}
\newcommand{\EiBop}{\mathbf{E}^\star_{\textrm{u},i}\bracket{k}}
\newcommand{\EBj}{\mathbf{E}_{\textrm{d},j}\bracket{k}}
\newcommand{\EBjone}{\mathbf{E}_{\textrm{d},j}\bracket{k}}
\newcommand{\EBjop}{\mathbf{E}^\star_{\textrm{d},j}\bracket{k}}
\newcommand{\duis}{\mathbf{d}_{\mathrm{u},i}\bracket{k}}
\newcommand{\ddjs}{\mathbf{d}_{\mathrm{d},j}\bracket{k}}
\newcommand{\yui}{\mathbf{y}_{\textrm{u},i}\bracket{k}}
\newcommand{\ydj}{\mathbf{y}_{\textrm{d},j}\bracket{k}}
\newcommand{\PiB}{\mathbf{P}_{\textrm{u},i}\bracket{k}}
\newcommand{\PiBH}{\mathbf{P}^\dagger_{\textrm{u},i}\bracket{k}}
\newcommand{\PqB}{\mathbf{P}_{\textrm{u},q}\bracket{k}}
\newcommand{\PqBH}{\mathbf{P}^\dagger_{\textrm{u},q}\bracket{k}}
\newcommand{\PBj}{\mathbf{P}_{\textrm{d},j}\bracket{k}}
\newcommand{\PBjH}{\mathbf{P}^\dagger_{\textrm{d},j}\bracket{k}}
\newcommand{\PBg}{\mathbf{P}_{\textrm{d},g}\bracket{k}}
\newcommand{\PBgH}{\mathbf{P}^\dagger_{\textrm{d},g}\bracket{k}}
\newcommand{\Rj}{\mathbf{R}_{\textrm{d},j}\bracket{k}}
\newcommand{\Rinjin}{\left( \mathbf{R}^{\textrm{in}}_{\mathrm{d},j}\bracket{k}\right)^{-1}}
\newcommand{\Ringin}{\left( \mathbf{R}^{\textrm{in}}_{\mathrm{d},g}\bracket{k}\right)^{-1}}
\newcommand{\Rjs}{\mathbf{R}_{\mathrm{d},j}\bracket{k}}
\newcommand{\Rinjins}{\left( \mathbf{R}^{\mathrm{in}}_{\mathrm{d},j}\bracket{k}\right)^{-1}}
\newcommand{\Ri}{\mathbf{R}_{\textrm{u},i}\bracket{k}}
\newcommand{\Riniin}{\left( \mathbf{R}^\mathrm{in}_{\mathrm{u},i}\bracket{k}\right)^{-1}}
\newcommand{\Rinqin}{\left( \mathbf{R}^{\textrm{in}}_{\mathrm{u},q}\bracket{k}\right)^{-1}}
\newcommand{\Ris}{\mathbf{R}_{\mathrm{u},i}\bracket{k}}
\newcommand{\UqB}{\mathbf{U}_{\textrm{u},q}\bracket{k}}
\newcommand{\UiB}{\mathbf{U}_{\textrm{u},i}\bracket{k}}
\newcommand{\UiBH}{\mathbf{U}^\dagger_{\textrm{u},i}\bracket{k}}
\newcommand{\UqBnH}{\mathbf{U}^\dagger_{\textrm{u},q}\bracket{k}}
\newcommand{\WiB}{\mathbf{W}_{\textrm{u},i}\bracket{k}}
\newcommand{\WiBn}{\mathbf{W}_{\textrm{u},i}\bracket{k}}
\newcommand{\WqB}{\mathbf{W}_{\textrm{u},q}\bracket{k}}
\newcommand{\UBj}{\mathbf{U}_{\textrm{d},j}\bracket{k}}
\newcommand{\UBjH}{\mathbf{U}^\dagger_{\textrm{d},j}\bracket{k}}
\newcommand{\WBj}{\mathbf{W}_{\mathrm{d},j}\bracket{k}}
\newcommand{\WBjop}{\mathbf{W}^\star_{\mathrm{d},j}\bracket{k}}
\newcommand{\WBjone}{\mathbf{W}_{\textrm{d},j}\bracket{k}}
\newcommand{\Wrnr}{\mathbf{W}_{\mathrm{r},n_\mathrm{r}}}
\newcommand{\urk}{\mathbf{u}_{\mathrm{r},n_\mathrm{r}}\bracket{k}}
\newcommand{\HrB}{\mathbf{H}_{\textrm{rB}}}
\newcommand{\HrBH}{\mathbf{H}^\dagger_{\textrm{rB}}}
\newcommand{\Hrj}{\mathbf{H}_{\textrm{r},j}}
\newcommand{\Hrg}{\mathbf{H}_{\textrm{r},g}}
\newcommand{\HrjH}{\mathbf{H}^\dagger_{\textrm{r},j}}
\newcommand{\HrgH}{\mathbf{H}^\dagger_{\textrm{r},g}}
\newcommand{\HBj}{\mathbf{H}_{\textrm{B},j}}
\newcommand{\HBjH}{\mathbf{H}^\dagger_{\textrm{B},j}}
\newcommand{\HBg}{\mathbf{H}_{\textrm{B},g}}
\newcommand{\HBgH}{\mathbf{H}^\dagger_{\textrm{B},g}}
\newcommand{\HBB}{\mathbf{H}_{\mathrm{BB}}}
\newcommand{\HBBH}{\mathbf{H}^\dagger_{\mathrm{BB}}}
\newcommand{\HiB}{\mathbf{H}_{i,\textrm{B}}}
\newcommand{\HiBH}{\mathbf{H}^\dagger_{i,\textrm{B}}}
\newcommand{\HqB}{\mathbf{H}_{q,\textrm{B}}}
\newcommand{\HqBH}{\mathbf{H}^\dagger_{q,\textrm{B}}}
\newcommand{\Hij}{\mathbf{H}_{i,j}}
\newcommand{\HijH}{\mathbf{H}^\dagger_{i,j}}
\newcommand{\sfrac}[2]{#1/#2}
\newtheorem{theorem}{Theorem}
\newtheorem{thm}{Theorem}
\newtheorem{prop}[thm]{Proposition}
\theoremstyle{definition}
\begin{document}

\shorttitle{Distributed MRMC - II}


\title [mode = title]{Co-Designing Statistical MIMO Radar and In-band Full-Duplex Multi-User MIMO Communications -- Part II: Joint Precoder, Radar Code, and Receive Filters Design}    

\tnotetext[1]{K. V. M. acknowledges support from the National Academies of Sciences, Engineering, and Medicine via the Army Research Laboratory Harry Diamond Distinguished Fellowship. The following co-authors are a member of EURASIP: Kumar Vijay Mishra.}

\author[label1]{Jiawei~Liu}
\author[label2]{Kumar Vijay Mishra}[orcid=0000-0002-5386-609X]
\author[label1]{Mohammad~Saquib}

\affiliation[label1]{organization={The University of Texas at Dallas},
            city={Richardson},
            postcode={TX 75080}, 
            country={USA}}
\affiliation[label2]{organization={United States DEVCOM Army Research Laboratory},
            city={Adelphi},
            postcode={MD 20783}, 
            country={USA}}

	\maketitle
\begin{abstract}
We address the challenge of spectral sharing between a statistical multiple-input multiple-output (MIMO) radar and an in-band full-duplex (IBFD) multi-user MIMO (MU-MIMO) communications system operating simultaneously in the same frequency band. Existing research on joint MIMO-radar-MIMO-communications (MRMC) systems has limitations, such as focusing on colocated MIMO radars, half-duplex MIMO communications, single-user scenarios, neglecting practical constraints, or employing separate transmit/receive units for MRMC coexistence. This paper, along with companion papers (Part I and III), proposes a comprehensive MRMC framework that addresses all these challenges. In the previous companion paper (Part I), we presented signal processing techniques for a distributed IBFD MRMC system. In this paper, we introduce joint design of statistical MIMO radar codes, uplink/downlink precoders, and corresponding receive filters using a novel metric called compounded-and-weighted sum mutual information. To solve the resulting highly non-convex problem, we employ a combination of block coordinate descent (BCD) and alternating projection methods. Numerical experiments show convergence of our algorithm, mitigation of uplink interference, and stable data rates under varying noise levels, channel estimate imperfections, and self-interference. The subsequent companion paper (Part III) extends the discussion to multiple targets and evaluates the tracking performance of our MRMC system.
\end{abstract}


\section{Introduction}
The increasing congestion of the electromagnetic spectrum in recent years has presented significant challenges in the design of radar and communications systems operating within the same frequency bands \cite{mishra2019toward}. While radar systems necessitate substantial transmit signal bandwidths for high-resolution target detection \cite{skolnik2008radar}, wireless cellular networks require access to a broad spectrum to support high data rates \cite{Multiuser,cover2006elements}. In response to the exponential growth of mobile data traffic, network operators globally have turned to higher frequency spectra to accommodate the surge in data usage [1]. Additionally, advancements in wireless communications and the continuous increase in carrier frequencies have prompted spectrum regulators such the Federal Communications Commission (FCC) and International Telecommunications Union (ITU) to grant civilian communications systems access to frequency bands traditionally reserved for radar and sensing applications. This policy shift has initiated a trend of coexistence and convergence between radar and communications functions \cite{mishra2019toward}.

The literature \cite{mishra2019toward} suggests two possible approaches toward joint radar-communications. In the coexistence approach, the radar and communications systems operate as separate entities within the same spectrum using different waveforms \cite{interferencealignment,ayyar2019robust}. In the co-design paradigm, the two systems are integrated into a single hardware platform, and a common waveform is employed at either the transmitter (Tx), receiver (Rx), or both \cite{dokhanchi2019mmwave,duggal2020doppler}. The effectiveness of these spectrum-sharing solutions depends on the level of cooperation between the radar and communications systems. A hybrid approach of \textit{spectral cooperation} has also been suggested, wherein some information exchange may take place between radar and communications systems \cite{MCMIMO_RadComm,he2019performance}. In this paper, we focus on spectral co-design aspects.

The aforementioned approaches do not readily extend to multiple-input multiple-output (MIMO) configurations, which employ multiple antennas at the transmitter and receiver to achieve high spectrum efficiency \cite{Multiuser,haimovich2008mimo}. MIMO configurations enhance communications capacity, provide spatial diversity, and exploit multipath propagation \cite{Multiuser}. Similarly, MIMO radars offer advantages over equivalent phased array radars, such as higher angular resolution with fewer antennas, spatial diversity, and improved parameter identification by leveraging waveform diversity \cite{fisher2006MIMO}. In a colocated MIMO radar \cite{li2007mimo}, the radar cross-section remains the same for closely-spaced antennas. In contrast, in a widely distributed or statistical MIMO radar, the antennas are sufficiently separated from each other, causing the same target to exhibit different radar cross-sections to each Tx-Rx pair. This spatial diversity is advantageous for detecting targets with small backscatters and low speed \cite{sun2024widely}. 

The increased degrees-of-freedom (DoFs), aperture sharing, and higher-dimensional optimization further complicate spectrum sharing in a joint MIMO-radar-MIMO-communications (MRMC) system \cite{alaee2020information,dokhanchi2020multi}. MRMC processing techniques include employing orthogonal transmit waveforms \cite{bao2019precoding} and receiver interference cancellation \cite{khawar2015target}; see, e.g.,  \cite{liu2024codesigningpart1} for a survey on MRMC solutions. Prior MRMC literature primarily focused on single-user MIMO communications and colocated MIMO radars. Co-design with statistical MIMO radar remains relatively unexamined in these prior works. In the previous companion paper (Part I) \cite{liu2024codesigningpart1}, we proposed spectral co-design of statistical MIMO radar with in-band full-duplex (IBFD) multi-user (MU) MIMO communications. The IBFD technology has been recently explored for joint radar-communications systems to facilitate communications transmission while also receiving the target echoes \cite{Barneto2021FDcommsensing}. 

The performance metrics to design radar and communications systems are not identical because of different system goals \cite{mishra2019toward}. As a result, recent works \cite{alaee2020information,dokhanchi2020multi}  have suggested mutual information (MI) as a common metric for joint radar and communications systems. Our previous companion paper (Part I) \cite{liu2024codesigningpart1} proposed MI-inspired novel \textit{compounded-and-weighted sum MI} (CWSM) for the MIMO radar and IBFD MU-MIMO communications co-design problem. In that paper, we described the receive signal processing for a co-designed distributed MRMC system but did not develop an algorithm to solve the design problem. In this paper, unlike many prior works that focus solely on one specific system goal and often in isolation with other processing modules, we propose using CWSM to jointly design the UL/DL precoders, MIMO radar waveform matrix, and linear receive filters (LRFs) for both systems.  Our co-design also accounts for several practical constraints, including the maximum UL/DL transmit powers, the QoS of the UL/DL quantified by their respective minimum achievable rates, and the peak-to-average-power-ratio (PAR) of the MIMO radar waveform. It is common among communications literature to identify a UL/DL UE's QoS with its minimum achievable rate \cite{MIMOCOMSecrecy,biswas2018fdqos}. Adopting low PAR waveforms is crucial for achieving energy- and cost-efficient RF front-ends \cite{NaghshTSP2017}. We address the non-convex CWSM maximization problem's challenges subject to non-convex constraints, namely the QoS and the PAR constraints, by developing an alternating algorithm that incorporates both the block coordinate descent (BCD) and the alternating projection (AP) methods. The BCD-AP process breaks the original problem into less complex subproblems that we iteratively solve. Numerical experiments show a quick, monotonic convergence of our proposed algorithm. Preliminary results of this work appeared in our conference publication \cite{liu2022transceiver}, where only communications design was considered, PAR constraint was excluded, and detailed theoretical guarantees were excluded. 

The rest of the paper is organized as follows. In the next section, we summarize the system model following the details included in the previous companion paper (Part I) \cite{liu2024codesigningpart1}. Then, we formulate the CWSM maximization problem in Section \ref{sec: formulation}. 
We develop the BCD-AP MRMC procedure to solve the non-convex optimization optimization problem iteratively in Section \ref{sec:solution}. We validate the proposed technique through numerical experiments in Section \ref{sec:numerical} before concluding in Section \ref{sec:conclusion}.
	
Throughout this paper, lowercase regular, lowercase boldface and uppercase boldface letters denote scalars, vectors and matrices, respectively. We use $I(\mathbf{X};\mathbf{Y})$ and $H\paren{\mathbf{X}|\mathbf{Y}}$ to denote, MI and conditional entropy between two random variables $\mathbf{X}$ and $\mathbf{Y}$, respectively. The notations $\mathbf{Y}\bracket{k}$, $\mathbf{y}\bracket{k}$, and $y\bracket{k}$ denote the value of time-variant matrix $\mathbf{Y}$, vector $\mathbf{y}$ and scalar $y$ at discrete-time index $k$, respectively; $\mathbf{1}_{N}$ is a vector of size $N$ with all ones; $\mathbb{C}$ and $\mathbb{R}$ represent sets of complex and real numbers, respectively; a circularly symmetric complex Gaussian (CSCG) vector $\mathbf{q}$ with $N$ elements and power spectral density $\mathcal{N}_0$ is $\mathbf{q}\sim\mathcal{CN}(0,\mathcal{N}_0\mathbf{I}_{N})$; $(\cdot)^{\star}$ is the solution of the optimization problem; $\mathbb{E}\bracket{\cdot}$ is the statistical expectation; $\trace\{\mathbf{R}\}$, $\mathbf{R}^\top$, $\mathbf{R}^\dagger$, $\mathbf{R}^\ast$, $\left| \mathbf{R}\right|$, $\mathbf{R}\succeq\mathbf{0}$, and $\mathbf{R}\paren{m,n}$ are the trace, transpose, Hermitian transpose, element-wise complex conjugate, determinant, positive semi-definiteness and $\ith{\paren{m,n}}$ entry of matrix $\mathbf{R}$, respectively; set $\mathbb{Z}_{+}(L)$ denotes $\left\lbrace1,\dots,L\right\rbrace$;  $\mathbf{x}\succeq\mathbf{y}$ denotes component-wise inequality between vectors $\mathbf{x}$ and $\mathbf{y}$; $x^+$ represents $\max(x,0)$; $x^{\paren{t}}\paren{\cdot}$ is the $\ith{t}$ iterate of an iterative function $x\paren{\cdot}$; $\inf(\cdot)$ is the infimum of its argument; $\odot$ denotes the Hadamard product; and $\oplus$ is the direct sum. 

\section{Spectral Co-Design System Model}
\label{sec:system}
The system model for the statistical MIMO The signal model used in this paper closely follows that detailed in the previous companion paper (Part I) \cite{liu2024codesigningpart1} and, hence, we only summarize the key aspects below.

Consider a two-dimensional (2-D) $\left(x\textrm{-}y \right)$ Cartesian plane on which the $M_\mathrm{r}$ Txs and $N_\mathrm{r}$ Rxs of a statistical MIMO radar, the BS, $I$ UL UEs, and $J$ DL UEs of the IBFD MU-MIMO communications system are located at the coordinates $\left(x_{m_\mathrm{r}},y_{m_\mathrm{r}}\right)$, $\left(x_{n_\mathrm{r}},y_{n_\mathrm{r}} \right)$,  $\paren{x_{\mathrm{B}},y_{\mathrm{B}}}$, $\paren{x_{\textrm{UL},i},y_{\textrm{UL},i}}$, and $\paren{x_{\mathrm{DL,}j},y_{\textrm{DL},j}}$, respectively, for all $m_\mathrm{r}\in{Z}_{+}(M_\mathrm{r})$, $n_\mathrm{r}\in\mathbb{Z}_{+}\paren{N_\mathrm{r}}$, $i\in\mathbb{Z}_{+}\paren{I}$, and $j\in\mathbb{Z}_{+}\paren{J}$. The statistical MIMO radar operates within the same transmit spectrum as an IBFD MU-MIMO communications system. Here, the radar aims to detect a target moving within the cellular coverage of the BS. The communications system serves the UL/DL UEs with desired achievable rates in the presence of the radar echoes. 
\subsection{Transmit Signal}
Each radar Tx emits a train of $\mathit{K}$ pulses at a uniform pulse repetition interval (PRI) $T_{\mathrm{r}}$ or \textit{fast-time}; the total duration $KT_{\mathrm{r}}$ is the \textit{coherent processing interval} (CPI) or \textit{slow-time} and $\mathit{K}$ is chosen to avoid range migration during the CPI \cite{skolnik2008radar}. At the same time,  the BS and each UL UE continuously transmit DL and UL symbols, respectively. The radar pulse width is $T_\mathrm{p}= T_\mathrm{r}/N$, where $N$ is the number of sampled range bins or cells in a PRI. The UL/DL frame duration $T_\mathrm{f}$ and the UL/DL symbol duration $T_{\mathrm{s}}$ equal radar PRI and radar pulse width $T_{\mathrm{p}}$, respectively; i.e., $T_{\mathrm{f}}=T_{\mathrm{r}}$ and $T_{\mathrm{s}}=T_{\mathrm{p}}$. This implies that the number of UL/DL frames transmitted in the scheduling window is also $\mathit{K}$ and the number of UL/DL symbols per frame is $\sfrac{T_{\mathrm{f}}}{T_{\mathrm{s}}}=N$. The $\ith{k}$ communications frame is transmitted at a duration of $GT_{\mathrm{p}}$, $G\in\mathbb{Z}\paren{N-1}$, before the $\ith{k}$ radar PRI. 

\color{black}
\subsubsection{Statistical MIMO Radar}
\label{sec: Radar_Tx_Signals}
Denote the \textit{narrowband} transmit pulse of the $m\rr$-th radar Tx by $\phi_{m_\mathrm{r}}\paren{t}$.  
The waveforms from all Txs form the waveform vector
\begin{align}
\boldsymbol{\phi}(t)=\left[ \phi_1(t),\dots,\phi_{M_\mathrm{r}}(t)\right]^\top\in\mathbb{C}^{M\rr},
\end{align}
and satisfy the orthonormality $\int_{T_\mathrm{p}}^{}\boldsymbol{\phi}(t)\boldsymbol{\phi}^\dagger(t)dt=\mathbf{I}_{M_\mathrm{r}}$. The radar code to modulate the pulse emitted by the $m\rr$ Tx in the $\ith{k}$ PRI is $a_{m\rr,k}$. During the observation window $t\in\bracket{0,KT\rr+GT_{\mathrm{p}}}$, the $\ith{m_\mathrm{r}}$ Tx emits the pulse train
\begin{align}
s_{m_\mathrm{r}}\paren{t}=\sum_{k=0}^{K-1}a_{m_\mathrm{r},k}\phi_{m_\mathrm{r}}\paren{t-kT\rr-GT_{\mathrm{p}}},\label{eq: radar_pulse_train}
\end{align}
where the support of $\phi_{m\rr}\paren{t}$ is $\left[0,T\rr\right)$ and, without loss of generality, $\phi_{m\rr}\paren{t}=\sqrt{\sfrac{1}{T_{\textrm{p}}}}e^{j2\pi\frac{m\rr}{T_{\textrm{p}}}t}$ for $m\rr\in\mathbb{Z}_+\paren{M\rr}$ for $t\in\left[0,T\rr\right)$. Define the radar code vector transmitted during the $\ith{k}$ PRI as $\mathbf{a}\bracket{k}=\bracket{a_{1,k},\cdots,a_{\mathit{M}\rr,k}}^\top\in\mathbb{C}^{M\rr}$ so that the MIMO radar code matrix is
\begin{align}
\mathbf{A}=\bracket{\mathbf{a}^\top\bracket{1};\cdots; \mathbf{a}^\top\bracket{\mathrm{\mathit{K}}}}=\bracket{\mathbf{a}_1,\cdots.\mathbf{a}_{M\rr}}\in\mathbb{C}^{\mathit{K}\times \mathit{M}\rr}. 
\end{align}
where $\mathbf{a}_{m\rr}\in\mathbb{C}^{K}$ is the code of the $\ith{m\rr}$ TX over all PRIs. The combined transmit signal vector is
\begin{align}
\mathbf{s}(t)=\bracket{s_1(t),\cdots,s_{M_\mathrm{r}}(t)}^\top\in\mathbb{C}^{M\rr}. 
\end{align}

\subsubsection{IBFD MU-MIMO Communications}
The BS and UEs operate in the FD and HD modes, respectively. During the observation window, the BS receives data frames from the $I$ UL UEs; concurrently, the $J$ DL UEs operating in the same band download data frames from the BS. The BS is equipped with $\mathit{M}_\mathrm{c}$ transmit and $N_{\mathrm{c}}$ receive antennas. The $i$-th UL UE and $j$-th DL UE  employ $N^{\textrm{u}}_{i}$ and $N^{\textrm{d}}_{j}$ transceive antennas, respectively. To achieve the maximum capacities of the UL and DL channels, number of BS Tx and Rx antennas are $\mathit{M}\cc\geq\sum_{j=1}^{\mathit{J}}N^{\textrm{d}}_{j}$ and $N\cc\geq\sum_{i=1}^{\mathit{I}}N^{\textrm{u}}_{i}$, respectively \cite{Multiuser}. 
A total of $\mathit{D}^{\textrm{u}}_{i}\leq N^{\textrm{u}}_{i}$ and $\mathit{D}^{\textrm{d}}_{j}\leq N^{\textrm{d}}_{j}$ unit-energy data streams are used by $i$-th UL UE and $j$-th DL UE, respectively. The symbol vectors sent by the $i$-th UL UE toward the BS and by the BS toward the $j$-th DL UE in the $\ith{l}$ symbol period of the $\ith{k}$ frame are $\mathbf{d}_{\mathrm{u},i}\bracket{k,l}\in \mathbb{C}^{D^\textrm{u}_{i}}$ and $\mathbf{d}_{\mathrm{d},j}\bracket{k,l}\in \mathbb{C}^{\mathit{D}^{\textrm{d}}_{j}}$, respectively; these are independent and identically distributed (i.i.d.) with $\mathbb{E}\bracket{\mathbf{d}_{\mathrm{d},j}\mathbf{d}^\dagger_{\mathrm{d},j}\bracket{k,l}}=\mathbb{E}\bracket{\mathbf{d}_{\mathrm{u},i}\mathbf{d}^\dagger_{\mathrm{u},i}\bracket{k,l}}=\mathbf{I}$ for $i\in\mathbb{Z}_+\braces{\mathit{I}}$, $k\in\mathbb{Z}_+\braces{\mathit{K}}$, and $l\in\mathbb{Z}_+\braces{N}$.

Denote the precoders for the $i$-th UL UE and the $j$-th DL UE at the $\ith{k}$ frame as $\PiB\in\mathbb{C}^{N^{\textrm{u}}_{i}\times \mathit{D}^{\textrm{u}}_{i}}$ and $\PBj\in\mathbb{C}^{\mathit{M}\cc\times \mathit{D}^{\textrm{d}}_{j}}$, respectively. The precoded transmit signal vectors for the $i$-th UL UE and $j$-th DL UE 
become 
\begin{align}
\mathbf{s}_{\textrm{u},i}\bracket{k,l}=\PiB\mathbf{d}_{\mathrm{u},i}\bracket{k,l}, 
\end{align}
and 
\begin{align}
\mathbf{s}_{\textrm{d},j}\bracket{k,l}=\PBj\mathbf{d}_{\mathrm{d},j}\bracket{k,l},
\end{align}
respectively. The total DL symbol vector broadcast by the BS in the same symbol period is 
\begin{align}
\mathbf{s}_{\mathrm{B}}\bracket{k,l}=\sum_{j=1}^{J}\mathbf{s}_{\textrm{d},j}\bracket{k,l}.
\end{align}
The transmit pulse shaping function used by the IBFD communications is $p_{\mathrm{T}}\paren{t}$. The transmit signals of $i$-th UL UE and BS are
\begin{flalign}
\mathbf{x}_{\mathrm{u},i}\paren{t}&=\sum_{k=0}^{K-1}\sum_{l=0}^{N-1}\mathbf{s}_{\mathrm{u},i}\bracket{k,l}p_{\mathrm{T}}\paren{t-(kN+l)T_{\mathrm{p}}}, \end{flalign}
and
\begin{flalign}
\mathbf{x}_{\mathrm{B}}\paren{t}&=\sum_{k=0}^{K-1}\sum_{l=0}^{N-1}\mathbf{s}_{\mathrm{B}}\bracket{k,l}p_{\mathrm{T}}\paren{t-\paren{k N+l}T_{\mathrm{p}}}.
\end{flalign}

\subsection{Statistical MIMO radar receiver}
\label{subsec:radar_total_receive}
The radar and DL signals are reflected off a single target and the combined echo is received at the $\ith{n\rr}$ radar Rx as $\mathbf{y}_{\mathrm{t},n\rr}$. It is overlaid with clutter echoes $\mathbf{y}_{\mathrm{c},n\rr}$, directly received IBFD MU-MIMO DL signal $\mathbf{y}_{\mathrm{Bm},n\rr}$, and interference from the UL $\mathbf{y}_{\mathrm{u},n\rr}$. With the CSCG noise vector at the $\ith{n\rr}$ radar Rx by $\mathbf{z}\rnr\in\mathcal{CN}\paren{\mathbf{0},\sigma^2\rnr\mathbf{I}_{K}}$, the composite receive signal model at the range cell under test (CUT) of the $\ith{n\rr}$ radar Rx is
\begin{flalign}
\mathbf{y}\rnr=\mathbf{y}_{\mathrm{t},n\rr}+\underbrace{\mathbf{y}_{\mathrm{c},n\rr}+\mathbf{y}_{\mathrm{Bm},n\rr}+\mathbf{y}_{\mathrm{u},n\rr}+\mathbf{z}\rnr}_{=\mathbf{y}^{\mathrm{in}}_{\mathrm{r},n\rr}},\label{eq:combined_rad_rx}
\end{flalign}\normalsize
where $\mathbf{y}^{\mathrm{in}}_{\mathrm{r},n\rr}$ denotes the interference-plus-noise component of $\mathbf{y}\rnr$. The covariance matrix (CM) of $\mathbf{y}\rnr$ is \cite{liu2024codesigningpart1}
\begin{align}
\mathbf{R}_{\textrm{r},n\rr}=\mathbf{R}_{\target,n\rr}+\mathbf{R}^{\mathrm{in}}_{\mathrm{r},n\rr},
\end{align}
with the CM of $\mathbf{y}^{\mathrm{in}}_{\textrm{r},n\rr}$ given by
\begin{align}
\mathbf{R}^{\mathrm{in}}_{\mathrm{r},n\rr}\triangleq\mathbf{R}_{\textrm{c},n\rr}+\mathbf{R}_{\mathrm{Bm},n\rr}+\mathbf{R}_{\mathrm{Ur},n\rr}+\sigma^2\rnr\mathbf{I}_{K}.
\end{align}

Combining the received signals from $N\rr$ radar Rxs yields
\begin{align}
\mathbf{y}_{\mathrm{r}}=\mathbf{y}_{\textrm{tr}}+\mathbf{y}^{\textrm{in}}_{\textrm{r}}=\bracket{\mathbf{y}^\top_{\textrm{r},1};\cdots;\mathbf{y}^\top_{\textrm{r},N\rr}}^\top\in\mathbb{C}^{KN\rr},
\end{align}
where 
\begin{align}
\mathbf{y}_{\textrm{tr}}=\bracket{\mathbf{y}^\top_{\textrm{t},1};\cdots;\mathbf{y}^\top_{\textrm{t},N\rr}}^\top,
\end{align}
and 
\begin{align}
\mathbf{y}^{\textrm{in}}_{\textrm{r}}\triangleq\mathbf{y}_{\textrm{cr}}+\mathbf{y}_{\textrm{Bmr}}+\mathbf{y}_{\textrm{Ur}}+\mathbf{z}_{\textrm{r}}=\bracket{\paren{\mathbf{y}^{\textrm{in}}_{\textrm{r},1}}^\top;\cdots;\paren{\mathbf{y}^{\textrm{in}}_{\textrm{r},N\rr}}^\top}^\top,
\end{align}
whose CM is 
\begin{align}
\mathbf{R}^{\textrm{in}}_{\textrm{r}}=\oplus_{n\rr=1}^{N\rr}\mathbf{R}^{\textrm{in}}_{\textrm{r},n\rr}.
\end{align}

\subsection{IBFD MU-MIMO Communications Receiver}
\label{subsec:comm_rx}
Within the observation window, $J$ DL UEs and the BS receive both IBFD communications signals and radar probing signals. For $l$-th symbol period and $k$-th frame, denote the CSCG noise vectors measured respectively at the BS Rx and the $\ith{j}$ DL UE as $\mathbf{z}_\textrm{B}\bracket{k,l}\sim\mathcal{CN}\paren{0,\sigma^2_{\textrm{B}}\mathbf{I}_{M\cc}}$ and $\mathbf{z}_{\textrm{d},j}\bracket{k,l}\sim\mathcal{CN}\paren{0,\sigma^2_{\mathrm{d},j}\mathbf{I}_{N^\mathrm{d}_{j}}}$, i.i.d in $k$ and $l$. Then,  the signal received at the BS Rx to decode $\mathbf{s}_{\textrm{u},i}\bracket{k,l}$ and the composite signal received by the $j$-th DL UE are, respectively, 
\begin{flalign}
\mathbf{y}_{\textrm{u},i}\bracket{k,l}&=\mathbf{y}_{i,\B}\bracket{k,l}+\mathbf{y}_{\textrm{um},i}\bracket{k,l}+\mathbf{y}_{\textrm{BB}}\bracket{k,l}+\mathbf{y}_{\textrm{rB}}\bracket{k,l}+\mathbf{z}_\textrm{B}\bracket{k,l},
\label{eq: UL_total_received}\\
\textrm{and }
\mathbf{y}_{\textrm{d},j}\bracket{k,l}&=\mathbf{y}_{\textrm{B},j}\bracket{k,l}+\mathbf{y}_{\mathrm{dm},j}\bracket{k,l}+\mathbf{y}_{\mathrm{u},j}\bracket{k,l}+\mathbf{y}_{\mathrm{r},j}\bracket{k,l}+\mathbf{z}_{\textrm{d},j}\bracket{k,l}, \label{eq: DL_total_received}
\end{flalign}
\normalsize
where $\mathbf{y}_{i,\B}\bracket{k,l}$ is the signal from $i$-th UL UE to the BS; $\mathbf{y}_{\textrm{um},i}\bracket{k,l}$ is the multi-user interference (MUI) from other UL UEs; $\mathbf{y}_{\textrm{BB}}\bracket{k,l}$ is the self-interference due to FD transmission; $\mathbf{y}_{\textrm{rB}}\bracket{k,l}$ ($\mathbf{y}_{\mathrm{r},j}\bracket{k,l}$) is the radar signal received at the BS ($j$-th DL UE); $\mathbf{y}_{\textrm{B},j}\bracket{k,l}$ is the signal from the BS aimed for the $j$-th DL UE; $\mathbf{y}_{\mathrm{dm},j}\bracket{k,l}$ is the MUI from other DL UEs; and $\mathbf{y}_{\mathrm{u},j}\bracket{k,l}$ is the UL interference at the $j$-th DL UE. The CMs of $\mathbf{y}_{\textrm{u},i}\bracket{k,l}$ and $\mathbf{y}_{\textrm{d},j}\bracket{k,l}$ are, respectively, 
\begin{flalign}
\mathbf{R}_{\mathrm{u},i}\bracket{k,l}&=\mathbf{R}_{i,\B}\bracket{k,l}+ \mathbf{R}^\mathrm{in}_{\mathrm{u},i}\bracket{k,l},
\end{flalign}
and
\begin{flalign}
\mathbf{R}_{\mathrm{d},j}\bracket{k,l}&=\mathbf{R}_{\mathrm{B,j}}\bracket{k,l}+\mathbf{R}^\mathrm{in}_{\mathrm{d},j}\bracket{k,l},
\end{flalign}
where 
\begin{align}
\mathbf{R}^\mathrm{in}_{\mathrm{u},i}\bracket{k,l}=\mathbf{R}_{\textrm{um}, i}\bracket{k,l}+\mathbf{R}_{\mathrm{BB}}\bracket{k,l}+\mathbf{R}_{\textrm{rB}}\bracket{k,l}+\sigma^2_{\textrm{B}}\mathbf{I}_{\mathit{M}\cc},
\end{align}
and
\begin{align}
\mathbf{R}^{\textrm{in}}_{\textrm{d},j}\bracket{k,l}=\mathbf{R}_{\mathrm{dm},j}\bracket{k,l}+\mathbf{R}_{\mathrm{u,}j}\bracket{k,l}+\mathbf{R}_{\mathrm{r},j}\bracket{k,l}+\sigma^2_j\mathbf{I}_{N^{\textrm{d}}_{j}},
\end{align}
denote the interference-plus-noise CMs associated with \eqref{eq: DL_total_received} and \eqref{eq: UL_total_received}, respectively. The precoders of IBFD communications are based on the $\ith{n_{\mathrm{rB}}}$ symbol period of $K$ UL frames and the $\ith{n_{\textrm{rd}}}$ symbol period of $K$ DL frames, where $n_{\mathrm{rB}}$ and $n_{\textrm{rd}}$ are the symbol indices of UL and DL, respectively. 
\figurename{~\ref{fig:ReceiveSignalModel}} illustrates the composite receive signals of BS Rx, $j$-th DL UE, and $n\rr$-th radar Rx.

\begin{figure}[!t]
	\centering
	\includegraphics[width=0.80\columnwidth]{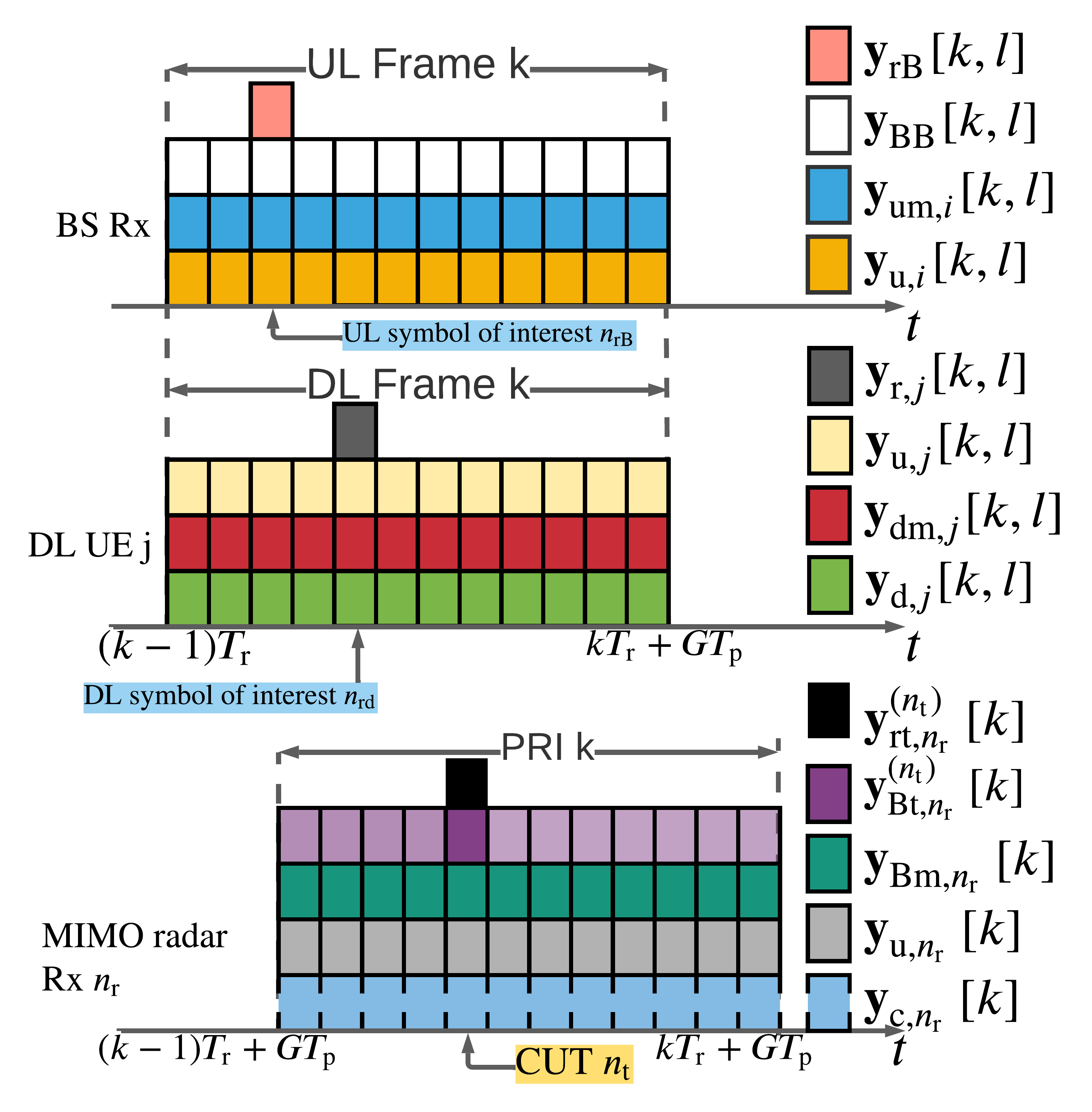}
	\caption{\textcolor{black}{The overlaid receive signal timing diagram during $\ith{k}$ radar PRI and $\ith{k}$ communications frame in the observation window; noise trails have been excluded. The purple bin with more opacity indicates the DL signal reflected from the target and observed in the radar CUT, i.e., $\mathbf{y}^{\paren{n_\target}}_{\textrm{Bt},n\rr}\bracket{k}$. } 
	} 
	\label{fig:ReceiveSignalModel}
\end{figure}

\section{CWSM Maximization}
\label{sec: formulation}
We now define the LRFs for the MIMO radar and the IBFD MU-MIMO communications system Rxs before introducing the MI-based co-design metric CWSM. Denote the LRF at the $\ith{n\rr}$ radar as $\mathbf{U}_{\mathrm{r},n\rr}=\bracket{\mathbf{u}_{\mathrm{r},n\rr}\bracket{0},\cdots,\mathbf{u}_{\mathrm{r},n\rr}\bracket{K-1}}\in\mathbb{C}^{KM\times\mathrm{\mathit{K}}}$. This LRF's output is 
\begin{align}
\widetilde{\mathbf{y}}_{\textrm{r},n\rr}&=\widetilde{\mathbf{y}}_{\mathrm{t},n\rr}+\widetilde{\mathbf{y}}^{\mathrm{in}}_{\mathrm{r},n\rr}=\mathbf{U}\rnr\mathbf{S}_{\mathrm{t},n\rr}\mathbf{h}_{\mathrm{t},n\rr}+\mathbf{U}\rnr\mathbf{y}^{\mathrm{in}}_{\mathrm{r},n\rr}
\end{align}
where $\mathbf{h}_{\mathrm{t},n\rr}\sim\mathcal{CN}\paren{\mathbf{0},\boldsymbol{\Sigma}_{\target,n\rr}}$ contains the target information and $\widetilde{\mathbf{y}}_{\mathrm{r},n\rr}\sim\mathcal{CN}\paren{\mathbf{0},\mathbf{U}\rnr\paren{\mathbf{R}_{\mathrm{t},n\rr}+\mathbf{R}^{\mathrm{in}}_{\mathrm{r},n\rr}}\mathbf{U}^\dagger\rnr}$.

Using the chain rule, the MI between $\widetilde{\mathbf{y}}_{\mathrm{r},n\rr}$ and 
$\mathbf{h}_{\mathrm{t},n\rr}$ is
\cite{Colornoise_waveform,Jammer_game}
\begin{align}
\mathit{I}\rnr&\triangleq\mathit{I}\paren{\widetilde{\mathbf{y}}\rnr;\mathbf{h}_{\mathrm{t},n\rr}}=\mathit{H}\paren{\widetilde{\mathbf{y}}\rnr}-\mathit{H}\paren{\widetilde{\mathbf{y}}\rnr|\mathbf{h}_{\mathrm{t},n\rr}}\nonumber\\
&=\mathit{H}\paren{\widetilde{\mathbf{y}}\rnr}-\mathit{H}\paren{\widetilde{\mathbf{y}}_{\textrm{t},n\rr}|\mathbf{h}_{\mathrm{t},n\rr}}-\mathit{H}\paren{\widetilde{\mathbf{y}}^{\mathrm{in}}_{\mathrm{r},n\rr}|\mathbf{h}_{\mathrm{t},n\rr}}\nonumber\\
&=\mathit{H}\paren{\widetilde{\mathbf{y}}\rnr}-\mathit{H}\paren{\widetilde{\mathbf{y}}^{\mathrm{in}}_{\mathrm{r},n\rr}},
\end{align}
where $\mathit{H}\paren{\widetilde{\mathbf{y}}_{\textrm{t},n\rr}|\mathbf{h}_{\mathrm{t},n\rr}}$ vanishes because $\widetilde{\mathbf{y}}_{\textrm{t},n\rr}$ depends on $\mathbf{h}_{\mathrm{t},n\rr}$; and $\mathit{H}\paren{\widetilde{\mathbf{y}}^{\mathrm{in}}_{\mathrm{r},n\rr}|\mathbf{h}_{\mathrm{t},n\rr}}$ reduces to $\mathit{H}\paren{\widetilde{\mathbf{y}}^{\mathrm{in}}_{\mathrm{r},n\rr}}$ because $\widetilde{\mathbf{y}}^{\textrm{in}}_{\textrm{r},n\rr}$ and $\mathbf{h}_{\textrm{t},n\rr}$ are mutually independent. 
The conditional differential entropy with the Gaussian noise \cite{Colornoise_waveform} leads to  
\begin{align}
\mathit{H}\paren{\widetilde{\mathbf{y}}\rnr|\mathbf{A}}=\varrho+\log\left|\mathbf{U}\rnr\mathbf{R}_{\mathrm{r},n\rr}\mathbf{U}^\dagger\rnr \right|
\end{align}
and
\begin{align}
\mathit{H}\paren{\widetilde{\mathbf{y}}^{\mathrm{in}}_{\mathrm{r},n\rr}|\mathbf{A}}=\varrho+\log\left|\mathbf{U}\rnr\mathbf{R}^{\mathrm{in}}_{\mathrm{r},n\rr}\mathbf{U}^\dagger\rnr \right|,
\end{align}
where the constant $\varrho=K\log\paren{\pi}+K$. This gives
\begin{align}\label{radarmi}
	\mathit{I}\rnr&=\log\frac{\left|\mathbf{U}\rnr\paren{\mathbf{R}_{\mathrm{t},n\rr}+\mathbf{R}^{\mathrm{in}}_{\mathrm{r},n\rr}}\mathbf{U}^\dagger\rnr \right|}{\left|\mathbf{U}\rnr\mathbf{R}^{\mathrm{in}}_{\mathrm{r},n\rr}\mathbf{U}^\dagger\rnr \right|}\nonumber\\
	&=\log\left\lvert\mathbf{I}+\mathbf{U}\rnr\mathbf{R}_{\textrm{t},n\rr}\mathbf{U}^\dagger\rnr\paren{\mathbf{U}\rnr\mathbf{R}^{\mathrm{in}}_{\mathrm{r},n\rr}\mathbf{U}^\dagger\rnr}^{-1}\right\rvert.
\end{align}

The LRFs deployed at the BS to decode the $\ith{i}$ UL UE and $\ith{j}$ DL UE during $\ith{k}$ frame of the observation window are  $\UiB\in\mathbb{C}^{\mathit{D}^{\textrm{u}}_{i}\times N\cc}$ and $\UBj\in\mathbb{C}^{\mathit{D}^{\textrm{d}}_{j}\times N^{\textrm{d}}_{j}}$, respectively. The outputs of $\UiB$ and $\UBj$ are $\widetilde{\mathbf{y}}_{\mathrm{u},i}\bracket{k,l}=\UiB\mathbf{y}_{\mathrm{u},i}\bracket{k,l}$ and $\widetilde{\mathbf{y}}_{\mathrm{d},j}\bracket{k,l}=\UBj\mathbf{y}_{\mathrm{d},j}\bracket{k,l}$; these signals follow the distributions $\mathcal{CN}\paren{\mathbf{0},\UiB\mathbf{R}_{\mathrm{u},i}\bracket{k,l}\UiBH}$ and $\mathcal{CN}\paren{\mathbf{0},\UBj\mathbf{R}_{\mathrm{d},j}\bracket{k,l}\UBjH}$, respectively. 

Using the common assumption of Gaussianity on symbol vectors, i.e. $\mathbf{s}_{\mathrm{u},i}\bracket{k,l}\sim\mathcal{CN}\paren{\mathbf{0},\PiB\PiBH}$ and $\mathbf{s}_{\mathrm{d},j}\bracket{k,l}\sim\mathcal{CN}\paren{\mathbf{0},\PBj\PBjH}$, the MIs between $\widetilde{\mathbf{y}}_{\mathrm{u},i}\bracket{k,l}$ and $\mathbf{s}_{\mathrm{u},i}\bracket{k,l}$ as well as $\widetilde{\mathbf{y}}_{\mathrm{d},j}\bracket{k,l}$ and $\mathbf{s}_{\mathrm{d},j}\bracket{k,l}$ are
\begin{flalign}
\mathit{I}^{\textrm{u}}_{i}\bracket{k,l}&\triangleq \mathit{I}\paren{\mathbf{s}_{\mathrm{u},i}\bracket{k,l};\widetilde{\mathbf{y}}_{\mathrm{u},i}\bracket{k,l}}
=\log\left|\mathbf{I}+\UiB\mathbf{R}_{i,\B}\bracket{k,l}\UiBH\paren{\UiB\mathbf{R}^{\mathrm{in}}_{\mathrm{u},i}\bracket{k,l}\UiBH}^{-1}\right|,
\end{flalign}
and
\begin{flalign}
\mathit{I}^{\textrm{d}}_{j}\bracket{k,l}&\triangleq I\paren{\mathbf{s}_{\mathrm{B},j}\bracket{k,l};\widetilde{\mathbf{y}}_{\mathrm{d},j}\bracket{k,l}}=\label{DLmutual} =\log\left|\mathbf{I}+\UBj\mathbf{R}_{\textrm{B},j}\bracket{k,l}\UBjH\paren{\UBj\mathbf{R}^{\mathrm{in}}_{\mathrm{d},j}\bracket{k,l}\UBjH}^{-1}\right|,
\end{flalign}
respectively. We evaluate FD communications based on the symbols-of-interest, i.e., $l=n_{\mathrm{rB}}$ for each UL frame and $l=n_{\textrm{rd}}$ for each DL frame in the observation window. The metric CWSM is a weighted sum of communications' MIs related to the symbol periods of interest\footnote{Hereafter, for simplicity, we drop symbol index $l=n_{\textrm{rB}}$ ($l=n_{\textrm{rd}}$) for UL (DL) related terms.} and $I_{\mathrm{r},n\rr}$, i.e., 
\begin{flalign}
\label{objectfunction1}
\mathit{I}_{\textrm{CWSM}}&=\sum_{n\rr=1}^{N\rr}\alpha^\textrm{r}_{n\rr} \mathit{I}^{\textrm{r}}_{n\rr}+\sum_{k=0}^{K-1}\bracket{\sum_{i=1}^\mathit{I}\alpha^\textrm{u}_i\mathit{I}^{\textrm{u}}_{i}\bracket{k}+\sum_{j=1}^\mathit{J}\alpha^\textrm{d}_j\mathit{I}^{\textrm{d}}_{j}\bracket{k}},
\end{flalign}
where $\alpha^\textrm{r}_{n\rr}$, $\alpha^\textrm{u}_i$, and $\alpha^\textrm{d}_j$ are pre-defined weights assigned to the MIMO $n\rr$-th radar Rx , $\ith{i}$ UL UE and $\ith{j}$ DL UE, respectively, for all $n\rr$, $i$, and $j$; the weights are determined by the system priority and specific applications. For example, for FD communications, the weights are based on available buffer capacities of BS and UEs\cite{MSE_FD,FD_WMMSE}. For the joint radar-communications, one can assign larger (smaller) weights to $\alpha^{\mathrm{r}}_{n_\mathrm{r}}$ with the presence (absence) of targets 
\cite{Liu2018Gloabalsip}.

Denote the sets of the precoders and the LRFs as $\braces{\mathbf{P}}\triangleq\braces{\PiB,\PBj | i\in\mathbb{Z}_+\braces{I}, j\in\mathbb{Z}_+\braces{J}, k\in\mathbb{Z}_+\braces{K}}$
and
$\braces{\mathbf{U}}\triangleq\left\lbrace\UiB, \UBj, \mathbf{U}\rnr | i\in\mathbb{Z}_+\braces{I}, j\in\mathbb{Z}_+\braces{J},\right.$ $\left.k\in\mathbb{Z}_+\braces{K}, n\rr\in \mathbb{Z}_+\braces{N\rr}\right\rbrace$. 
The transmission powers that occurred to the BS and the $i$-th UL UE at the $\ith{k}$ frame are
\begin{align}
\label{eq: DL_power}
P_{\mathrm{d}}\bracket{k}=\sum_{j=1}^{J}P_{\mathrm{d},j}\bracket{k}=\sum_{j=1}^{J}\trace\braces{\PBj\PBjH},
\end{align}
and 
\begin{align}
P_{\mathrm{u},i}\bracket{k}=\trace\braces{\PiB\PiBH}, \label{eq: UL_power}
\end{align}
which are upper bounded by the maximum DL and UL powers $P_\B$ and $P_{\mathrm{U}}$, respectively. The achievable rates for the $i$-th UL UE and the $j$-th DL UE in the $\ith{k}$ frame, $R_{\mathrm{u},i}\bracket{k}$ and $R_{\mathrm{d},j}\bracket{k}$ are lower bounded by the least acceptable achievable rates to quantify the QoS of the UL and DL, $\mathit{R}_{\textrm{UL}}$ and $\mathit{R}_{\textrm{DL}}$, respectively. The CWSM optimization to jointly design precoders $\braces{\mathbf{P}}$, radar code $\mathbf{A}$, and LRFs $\braces{\mathbf{U}}$ is 
\begin{subequations}\label{jointop}\begin{align}
	\underset{{\braces{\mathbf{P}},\braces{\mathbf{U}},
			\mathbf{A}}}{\text{maximize}}\;& \mathit{I}_{\textrm{CWSM}}\paren{\braces{\mathbf{U}},\braces{\mathbf{P}},\mathbf{A}}  \\
	\text{subject to}\; & P_{\mathrm{d}}\bracket{k}\leq             
                            P_\textrm{B},\label{DL_power}\\
                            &P_{\mathrm{u},i}\bracket{k}\leq P_\textrm{U}, \\*
	                       &\mathit{R}_{\textrm{u},i}\bracket{k}\geq\mathit{R}_{\textrm{UL}}, \\
                           &\mathit{R}_{\textrm{d},j}\bracket{k}\geq \mathit{R}_{\textrm{DL}},\label{DLrate}\\
	                   &\lVert\mathbf{a}_{m\rr}\rVert^2 =P_{\textrm{r},m\rr},\; \label{constraint:radarpower}\\*
	                    &\frac{\mathrm{\mathit{K}}\max_{k=1,\cdots,  K}\lvert\mathbf{a}_{m\rr}\bracket{k}\rvert^2}{P_{\mathrm{r},m\rr}}\leq\mathrm{\gamma}_{m\rr},\; \forall\; i,j,k,m\rr,\label{constraint:radarpar}
\end{align}
\end{subequations}
where constraints \eqref{constraint:radarpower} and \eqref{constraint:radarpar}  are determined by the transmit power and PAR of the $\ith{m\rr}$ MIMO radar Tx, respectively. Note that PAR constraint is applied column-wise to the code matrix $\mathbf{A}$ because the Txs of a statistical MIMO radar are widely distributed. When $\gamma_{m\rr}$ = 1, PAR constraint is reduced to constant modulus constraint. 

\section{Joint Code-Precoder-Filter Design}
\label{sec:solution}
Even without the non-convex constraints $\eqref{DLrate}$-$\paren{\mathrm{\ref{constraint:radarpar}}}$, $\paren{\mathrm{\ref{jointop}}}$ is non-convex because the objective function $\mathit{I}_{\textrm{CWSM}}$ is not jointly concave over ${\braces{\mathbf{P}}}$, $\braces{\mathbf{U}}$, and $\mathbf{A}$, and therefore its global optima are generally intractable \cite{Lui2006subg}. In general, such a problem is solved by alternately optimizing over one unknown variable at a time. When the number of variables is large, methods such as the BCD partition all optimization variables into, say, $V$ small groups or blocks and optimize over each block, one at a time, while keeping other block variables fixed \cite{BCDconvergence}. The net effect is that the problem is equivalently solved by iteratively solving less complex $V$ subproblems. If there are only two blocks of variables, the BCD reduces to the classical alternating minimization method\cite{BCDconvergence,Liu2017asilomar}.  

It has been shown \cite{ADMMBCD,zhang2017convergent} that the BCD converges globally to a stationary point for both convex and non-convex problems while methods such as Alternating Direction Method of Multipliers (ADMM) and Douglas-Rachford Splitting (DRS) achieve only linear convergence for strictly convex and some non-convex (e.g. multi-convex) problems. The stochastic gradient descent used to address saddle point problems has a slower convergence rate than BCD and offers only weak convergence for non-convex problems \cite{zhang2017convergent}. 

One can partition the block coordinate variables from \eqref{jointop} into three groups, i.e., $\braces{\mathbf{P}}$, $\mathbf{A}$, and $\braces{\mathbf{U}}$. At each iteration, we apply a \textit{direct update} \cite{BCDconvergence}, i.e., maximize $I_{\textrm{CWSM}}$ for all the block variables. Further, we update the block variables in a cyclic sequence because its global and local convergence has been well-established \cite{BCDconvergence,Lops2019serveillance} 
compared to other sequential update rules\footnote{A possible alternative is the randomized BCD, where the series of iterates generated by BCD are divergent. However, cyclic BCD may still outperform the randomized BCD \cite{ADMMBCD}.}. In particular, we adopt the Gauss-Seidel BCD \cite{BCDconvergence}, which minimizes the objective function cyclically over each block while keeping the other blocks fixed. 

A summary of our strategy is as follows. Note that ${\braces{\mathbf{P}}}$ is subject to only communications-centric constraints $\paren{\mathrm{\ref{DL_power}}}$-$\paren{\mathrm{\ref{DLrate}}}$ and $\mathbf{A}$ to both radar-centric PAR constraints $\paren{\mathrm{\ref{constraint:radarpower}}}$-$\paren{\mathrm{\ref{constraint:radarpar}}}$ and communications-centric constraints $\paren{\mathrm{\ref{DL_power}}}$-$\paren{\mathrm{\ref{DLrate}}}$. 
We denote the sets of feasible $\mathbf{A}$ determined by $\paren{\mathrm{\ref{DL_power}}}$-$\paren{\mathrm{\ref{DLrate}}}$ and $\paren{\mathrm{\ref{constraint:radarpower}}}$-$\paren{\mathrm{\ref{constraint:radarpar}}}$ as $\mathbb{A}_{\textrm{c}}$ and $\mathbb{A}_{\textrm{r}}$, respectively and the optimal solution for $\mathbf{A}$, i.e., $\mathbf{A}^\star$ is thus in the intersection of $\mathbb{A}_{\textrm{c}}$ and $\mathbb{A}_{\textrm{r}}$, i.e., $\mathbf{A}^\star\in\mathbb{A}_{\textrm{c}}\cap\mathbb{A}_{\textrm{r}}$. The AP method \cite{arXiv180203889Z,nearestvector} is appropriate to perform the search for $\mathbf{A}^\star$. In the sequel of this section, we first transform the $I_{\textrm{CWSM}}$ maximization problem, which is a weighted sum rate (WSR) problem,  in \eqref{jointop} to a weighted minimum mean-squared-error (WMMSE) minimization problem without the PAR constraints in Section \ref{subsec: MMSE section}. It has been shown in \cite{Luo2011IterativeWMMSE,FD_WMMSE} that maximizing information-theoretic quantities via WMMSE for MIMO precoder design yields better results than geometric programming. Mapping a WSR problem to its corresponding WMMSE problem is also more computationally efficient than gradient-based (GB) approaches \cite{FD_WMMSE} because of its low per-iteration complexity, which is guaranteed to converge to at least a local optimum \cite{Luo2011IterativeWMMSE}. In Section~\ref{subsec:seq}, we then employ the BCD in our proposed WMMSE-MRMC algorithm to solve for the optimal $\mathbf{A}$ in $\mathbb{A}_{\textrm{c}}$, which we denote as $\mathbf{A}^\prime$, and the optimal $\braces{\mathbf{P}^\star}$. Then, we project each column of $\mathbf{A}^\prime$ onto $\mathbb{A}_{\textrm{r}}$ in Section~\ref{subsec: PAR} using AP. The WMMSE-MRMC and AP procedures comprise the overall BCD-AP MRMC algorithm (Algorithm~\ref{Alternating_sum}) and are repeated until convergence.

\subsection{Relationship between WMMSE and MI}
\label{subsec: MMSE section}
Consider $\paren{\mathrm{\ref{jointop}}}$ without PAR constraints,
\begin{equation}\label{jointop_first}
\underset{{\braces{\mathbf{P}},\braces{\mathbf{U}},
\mathbf{A}}}{\text{maximize}}\; \mathit{I}_{\textrm{CWSM}}\paren{\braces{\mathbf{U}},\braces{\mathbf{P}},\mathbf{A}}\;\;\text{subject to}\; 
\eqref{DL_power}-\eqref{DLrate}.
\end{equation}
To derive the WMMSE expressions regarding $\eqref{jointop_first}$, we first define the mean squared error for the $\ith{n\rr}$ radar Rx, $\ith{i}$ UL UE, and $\ith{j}$ DL UE as 
\begin{flalign}
\label{radarMSE}
\mathbf{E}\rnr&=\mathbb{E}\bracket{\paren{\mathbf{h}_{\mathrm{t},n\rr}-\mathbf{U}\rnr\mathbf{y}_{\mathrm{r},n\rr}}\paren{\mathbf{h}_{\mathrm{t},n\rr}-\mathbf{U}\rnr\mathbf{y}_{\mathrm{r},n\rr}}^\dagger}\nonumber\\
&=\boldsymbol{\Sigma}_{\target,n\rr}-\mathbf{U}\rnr\mathbf{S}_{\target,n\rr}\boldsymbol{\Sigma}_{\target,n\rr}-\boldsymbol{\Sigma}^\dagger_{\target,n\rr}\mathbf{S}^\dagger_{\target,n\rr}\mathbf{U}^\dagger\rnr\nonumber\\
&+\mathbf{U}\rnr\mathbf{R}_{\textrm{r},n\rr}\mathbf{U}^\dagger\rnr,
\end{flalign}
\begin{flalign}
\label{ULMSE}
\EiB &=\mathbb{E}\bracket{\paren{\duis-\UiB\yui}\paren{\duis-\UiB\yui}^\dagger}\nonumber\\
&=\mathbf{I}-\UiB\HiB\PiB-\PiBH\HiBH\UiBH\nonumber\\
&+\UiB\Ris\UiBH,
\end{flalign}
\begin{flalign}
\label{DLMSE}
\EBj&=\mathbb{E}\bracket{\paren{\ddjs-\UBj\ydj}\paren{\ddjs-\UBj\ydj}^\dagger}\nonumber\\
&=\mathbf{I}-\UBj\HBj\PBj-\PBjH\HBjH\UBjH\nonumber\\
&+\UBj\Rjs\UBjH,
\end{flalign}\normalsize
where the expectations are taken w.r.t.  $\mathbf{h}_{\mathrm{t},n\rr}$, $\duis$, and $\ddjs$, respectively. Denote symmetric weight matrices associated with $\Ernr$, $\EiB$, and $\EBj$ as $\mathbf{W}\rnr\in\mathbb{C}^{KM\times KM}\succeq\mathbf{0}$, $\WiB\in\mathbb{C}^{D^{\mathrm{u}}_{i}\times D^{\mathrm{u}}_{i}}\succeq\mathbf{0}$, and $\WBj\in\mathbb{C}^{D^{\mathrm{d}}_{j}\times D^{\mathrm{d}}_{j}}\succeq\mathbf{0}$, respectively. The weighted-sum MSE is 
\begin{flalign}\label{Xi_Mses}
&\Xi_{\textrm{wmse}}\triangleq\underbrace{\sum_{k=1}^{\mathrm{\mathit{K}}}\sum_{i=1}^\mathit{I}\alpha^\textrm{u}_i\trace\braces{\WiBn\EiBn}}_{=\Xi_{\textrm{UL}}}+\underbrace{\sum_{n\rr=1}^{N\rr}\alpha^\textrm{r}_{n\rr}\trace\braces{\mathbf{W}\rnr\mathbf{E}\rnr}}_{=\Xi_{\textrm{r}}} \nonumber\\
& +\underbrace{\sum_{k=1}^{\mathrm{\mathit{K}}}\sum_{j=1}^\mathit{J}\alpha^\textrm{d}_j\trace\braces{\WBjone\EBjone} }_{=\Xi_{\textrm{DL}}},
\end{flalign}
where $\WiB\EiB$, $\WBj\EBj$ and $\Wrnr\Ernr$ are: 
\begin{align}
	&\WiB\EiB\nonumber\\
 &=\WiB\mathbf{I}-\WiB\UiB\HiB\PiB-\WiB\PiBH\HiBH\UiBH+\WiB\UiB\HiB\PiB\PiBH\HiBH\UiBH\nonumber\\
	&+\WiB\UiB\paren{\sum_{q\neq i}\HqB\PqB\PqBH\HqBH+\sum_{j=1}^{\mathit{J}}\HBB\PBj\PBjH\HBBH+\HrB\mathbf{a}\bracket{k}\mathbf{a}^\dagger\bracket{k}\HrBH}\UiBH,	\label{WEi}\\
	&\WBj\EBj\nonumber\\
 &=\WBj\mathbf{I}-\WBj\UBj\HBj\PBj-\WBj\PBjH\HBjH\UBjH\nonumber\\
 &+\WBj\UBj\HBj\PBj\PBjH\HBjH\UBjH\nonumber\\
	&+\WBj\UBj\paren{\sum_{g\neq j}\HBj\PqB\PqBH\HBjH+\sum_{i=1}^{\mathit{I}}\Hij\PiB\PiBH\HijH+\Hrj\mathbf{a}\bracket{k}\mathbf{a}^\dagger\bracket{k}\HrjH}\UBjH,		\label{WEj}
 \end{align}
and
 \begin{align}
&\mathbf{W}\rnr\mathbf{E}\rnr\nonumber\\
&=\mathbf{W}\rnr\mathbb{E}\bracket{\paren{\mathbf{h}_{\mathrm{t},n\rr}-\sum_{k=1}^{\mathrm{\mathit{K}}}\urk\braces{\mathbf{y}				_{\textrm{t},n\rr}\bracket{k}+\mathbf{y}^{\mathrm{in}}_{\mathrm{r},n\rr}\bracket{k}}}\paren{\mathbf{h}_{\mathrm{t},n\rr}-\sum_{k=1}^{\mathrm{\mathit{K}}}\urk\braces{\mathbf{y}					_{\textrm{t},n\rr}\bracket{k}+\mathbf{y}^{\mathrm{in}}_{\mathrm{r},n\rr}\bracket{k}}}^\dagger}\nonumber\\
	&=\Wrnr\boldsymbol{\Sigma}_{\target,n\rr}-\Wrnr\mathbb{E}\bracket{\sum_{m=1}^{K}\mathbf{J}_{\textrm{h}}\bracket{m}\paren{\mathbf{J}_{\textrm{r}}\mathbf{h}_{\textrm{rt},n\rr}\bracket{m}+\mathbf{J}_{\textrm{B}}\mathbf{h}_{\textrm{Bt},n\rr}\bracket{m}}\sum_{\ell=1}^K\paren{\mathbf{a}^\dagger\bracket{\ell}\mathbf{h}^\ast_{\mathrm{rt},n\rr}\bracket{\ell}+\mathbf{s}^\dagger_{\mathrm{Bt},n\rr}\bracket{\ell}\mathbf{h}^\ast_{\mathrm{Bt},n\rr}\bracket{\ell}}\mathbf{u}^\dagger\rnr\bracket{\ell}}\nonumber\\
	&-\Wrnr\mathbb{E}\bracket{\sum_{m=1}^{K}\mathbf{u}\rnr\bracket{m}\paren{\mathbf{h}_{\mathrm{rt},n\rr}\bracket{m}\mathbf{a}\bracket{m}+\mathbf{h}_{\mathrm{Bt},n\rr}\bracket{m}\mathbf{s}_{\mathrm{Bt},n\rr}\bracket{m}}\sum_{\ell=1}^K\paren{\mathbf{h}^\dagger_{\textrm{rt},n\rr}\bracket{\ell}\mathbf{J}^\dagger_{\textrm{r}}+\mathbf{h}^\dagger_{\textrm{Bt},n\rr}\bracket{\ell}\mathbf{J}^\dagger_{\textrm{B}}}\mathbf{J}^\dagger_{\textrm{h}}\bracket{\ell}}\nonumber\\
	&+\Wrnr\sum_{m=1}^{\mathrm{\mathit{K}}}\sum_{\ell=1}^{\mathrm{\mathit{K}}}\mathbf{u}\rnr\bracket{m}\paren{\mathbf{R}_{\mathrm{rt},n\rr}\paren{m,\ell}+\mathbf{R}_{\mathrm{Bt},n\rr}\paren{m,\ell}+\mathbf{R}_{\textrm{Bm},n\rr}\paren{m,\ell}+\mathbf{R}_{\textrm{U},n\rr}\paren{m,\ell}+\mathbf{R}_{\textrm{c},n\rr}\paren{m,\ell}+\sigma_{n\rr}}\mathbf{u}^\dagger\rnr\bracket{\ell}.
	\label{WEr}
	\end{align}\normalsize
\color{black}
Minimizing \eqref{Xi_Mses} is the key to solving the problematic non-convex problem in \eqref{jointop_first}, as stated in the following theorem.

\begin{theorem}
\label{the: one}
Solving the problem 
\begin{align}
	\label{WMMSE1}
	\underset{{\braces{\mathbf{P}},\braces{\mathbf{U}},
			\mathbf{A},\braces{\mathbf{W}}}}{\textrm{minimize}}&\quad\Xi_{\textrm{wmse}}\paren{\braces{\mathbf{U}}, \braces{\mathbf{P}},\mathbf{A},\braces{\mathbf{W}}},\\
	\textrm{subject to}&\quad \eqref{DL_power}-\eqref{DLrate}\nonumber 
\end{align} \normalsize
yields the exact solution of the problem \eqref{jointop_first}.
\end{theorem}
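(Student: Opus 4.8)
The statement is the joint radar--communications instance of the classical weighted-sum-rate / weighted-MMSE equivalence \cite{Luo2011IterativeWMMSE,FD_WMMSE}. The plan is to show that, for fixed precoders $\braces{\mathbf{P}}$ and radar code $\mathbf{A}$, partially minimizing $\Xi_{\textrm{wmse}}$ over the receive filters $\braces{\mathbf{U}}$ and the weight matrices $\braces{\mathbf{W}}$ produces $C-\mathit{I}_{\textrm{CWSM}}\paren{\braces{\mathbf{P}},\mathbf{A}}$ for a fixed constant $C$, the minimizing $\braces{\mathbf{U}}$ being exactly the Wiener (MMSE) filters that attain the MIs $\mathit{I}^{\textrm{r}}_{n\rr}$, $\mathit{I}^{\textrm{u}}_i\bracket{k}$, $\mathit{I}^{\textrm{d}}_j\bracket{k}$ of \eqref{radarmi}--\eqref{DLmutual}. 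Since $C$ does not move the argmin, and the remaining constraints \eqref{DL_power}--\eqref{DLrate} act on $\braces{\mathbf{P}},\mathbf{A}$ only once the rate terms are re-expressed consistently, the two problems then share the same feasible set and the same optimizers in $\braces{\mathbf{P}},\mathbf{A}$; the $\braces{\mathbf{U}},\braces{\mathbf{W}}$ parts of the WMMSE optimum are the induced MMSE filters and inverse-MSE weights, which is what ``exact solution'' means here.

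\textbf{Step 1: optimal filters.} First I would note that each of $\Ernr$, $\EiB$, $\EBj$ in \eqref{radarMSE}--\eqref{DLMSE} is a matrix-convex quadratic in its own receive filter, and that $\mathbf{W}$ enters the matching summand of $\Xi_{\textrm{wmse}}$ only through $\alpha\paren{\trace\braces{\mathbf{W}\mathbf{E}}-\log\left|\mathbf{W}\right|}$ with $\mathbf{W}\succ\mathbf{0}$. Equating the derivative with respect to the conjugate filter to $\mathbf{0}$ then gives the Wiener solutions, e.g. $\mathbf{U}^\star\rnr=\sigmanr\mathbf{S}^\dagger_{\target,n\rr}\paren{\mathbf{R}_{\textrm{r},n\rr}}^{-1}$ for the radar and $\PiBH\HiBH\Riins$ for the $i$-th UL UE, with the analogous expression for the DL; none depends on $\mathbf{W}$ because $\mathbf{W}\succ\mathbf{0}$. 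Substituting back and applying the matrix inversion (Woodbury) identity turns the minimized error matrices into, for instance, $\paren{\EiBop}^{-1}=\mathbf{I}+\PiBH\HiBH\Riniin\HiB\PiB$ and its DL/radar counterparts, so that $\log\left|\paren{\EiBop}^{-1}\right|=\mathit{I}^{\textrm{u}}_i\bracket{k}$, $\log\left|\paren{\EBjop}^{-1}\right|=\mathit{I}^{\textrm{d}}_j\bracket{k}$, and $\log\left|\sigmanr\paren{\Ernrop}^{-1}\right|=\mathit{I}^{\textrm{r}}_{n\rr}$; here the identity $\left|\mathbf{I}+\mathbf{X}\mathbf{Y}\right|=\left|\mathbf{I}+\mathbf{Y}\mathbf{X}\right|$ brings the left-hand sides into the ratio forms of \eqref{radarmi}--\eqref{DLmutual}, and the extra factor $\sigmanr$ in the radar case is precisely because $\mathbf{h}_{\target,n\rr}$ has covariance $\sigmanr$ rather than $\mathbf{I}$.

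\textbf{Step 2: optimal weights and constraints.} Next I would use that, for fixed $\mathbf{E}\succ\mathbf{0}$, the map $\mathbf{W}\mapsto\trace\braces{\mathbf{W}\mathbf{E}}-\log\left|\mathbf{W}\right|$ is strictly convex on the positive-definite cone, with unique stationary point $\mathbf{W}^\star=\mathbf{E}^{-1}$ and minimum value $\trace\braces{\mathbf{I}}-\log\left|\mathbf{E}^{-1}\right|$. Chaining this with Step 1 --- the order of the two partial minimizations being immaterial, since the MMSE filter does not involve $\mathbf{W}$ and the resulting $\Ernrop,\EiBop,\EBjop$ are positive definite so $\mathbf{W}^\star=\mathbf{E}^{-1}$ remains well defined --- yields $\min_{\braces{\mathbf{U}},\braces{\mathbf{W}}}\Xi_{\textrm{wmse}}=C-\mathit{I}_{\textrm{CWSM}}\paren{\braces{\mathbf{P}},\mathbf{A}}$, where $C$ collects the $\alpha$-weighted dimension constants $\trace\braces{\mathbf{I}}$ together with $\sum_{n\rr}\alpha^{\textrm{r}}_{n\rr}\log\left|\sigmanr\right|$. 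I would then check that the feasible sets agree: \eqref{DL_power}--\eqref{DLrate} are either power budgets in $\braces{\mathbf{P}}$ alone or the QoS constraints $\mathit{R}_{\textrm{u},i}\bracket{k}\ge\mathit{R}_{\textrm{UL}}$, $\mathit{R}_{\textrm{d},j}\bracket{k}\ge\mathit{R}_{\textrm{DL}}$; writing the achievable rates as $\mathit{I}^{\textrm{u}}_i\bracket{k}$ and $\mathit{I}^{\textrm{d}}_j\bracket{k}$ and invoking Step 1, these become $\log\left|\paren{\EiBop}^{-1}\right|\ge\mathit{R}_{\textrm{UL}}$ and $\log\left|\paren{\EBjop}^{-1}\right|\ge\mathit{R}_{\textrm{DL}}$, i.e. exactly the image of the original QoS constraints under the MMSE substitution. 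Hence $\paren{\braces{\mathbf{P}},\braces{\mathbf{U}},\mathbf{A},\braces{\mathbf{W}}}$ solves \eqref{WMMSE1} if and only if $\paren{\braces{\mathbf{P}},\braces{\mathbf{U}},\mathbf{A}}$ solves \eqref{jointop_first} and $\braces{\mathbf{W}}$ are the inverse-MSE weights.

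\textbf{Main obstacle.} I expect the radar block to be the delicate part. Unlike the bilinear UL/DL channels, the radar MSE \eqref{radarMSE} --- expanded in \eqref{WEr} --- couples $\mathbf{A}$ into both $\mathbf{S}_{\target,n\rr}$ and the interference-plus-noise covariance $\mathbf{R}^{\textrm{in}}_{\textrm{r},n\rr}$ in a genuinely matrix-quadratic way, and the ``signal'' $\mathbf{h}_{\target,n\rr}$ is random with non-identity covariance; checking that the MMSE-filter-plus-Woodbury reduction reproduces \emph{exactly} $\mathit{I}^{\textrm{r}}_{n\rr}$ of \eqref{radarmi} --- up only to the constant $\log\left|\sigmanr\right|$, which is absorbed into $C$ --- will be the step most prone to algebraic slips and needing the most care. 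Two routine points complete the argument: every inverse used exists because each interference-plus-noise covariance contains a positive-definite noise term and $\sigmanr\succ\mathbf{0}$; and the infima over $\braces{\mathbf{U}}$ and $\braces{\mathbf{W}}$ are \emph{attained}, so the equivalence is between genuine optimizers rather than merely optimal values.
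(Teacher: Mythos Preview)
Your proposal is correct and follows the same WSR--WMMSE equivalence route as the paper: MMSE receive filters, inverse-MSE weights, Woodbury to recover the rate expressions, then collapse to $-\mathit{I}_{\textrm{CWSM}}$ plus a constant. The one substantive difference is in how the $-\log\left|\mathbf{W}\right|$ terms are handled. You assume they are already part of $\Xi_{\textrm{wmse}}$, which is the standard formulation in \cite{Luo2011IterativeWMMSE} and is what makes the joint minimization over $\braces{\mathbf{W}}$ in \eqref{WMMSE1} well posed; the paper's $\Xi_{\textrm{wmse}}$ in \eqref{Xi_Mses} is defined \emph{without} those terms, and the proof instead introduces the auxiliary $\Xi'_{\textrm{wmse}}=\Xi_{\textrm{wmse}}-\sum\alpha\log\left|\mathbf{W}\right|-\text{const}$, arguing at the end that minimizing $\Xi_{\textrm{wmse}}$ over $\braces{\mathbf{P}},\mathbf{A}$ with $\braces{\mathbf{U}^\star},\braces{\mathbf{W}^\star}$ held fixed coincides with minimizing $\Xi'_{\textrm{wmse}}$. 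Your packaging is cleaner and makes the theorem statement literally true (without the $-\log\left|\mathbf{W}\right|$ term, the unconstrained infimum over $\mathbf{W}\succeq\mathbf{0}$ is at $\mathbf{W}=\mathbf{0}$); the paper's version really establishes the equivalence needed for the subsequent BCD update rather than the joint minimization as written. Your flagged ``main obstacle'' --- the non-identity prior $\sigmanr$ producing the extra $\log\left|\sigmanr\right|$ in the radar block --- is exactly what the paper handles via $\mathit{R}_{\mathrm{r},n\rr}=\log\left|\sigmanr\paren{\Ernrop}^{-1}\right|$, so you have anticipated the only non-routine step.
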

\begin{proof}
The optimization of $\paren{\ref{WMMSE1}}$ w.r.t. $\braces{\mathbf{U}}$ yields \cite{FD_WMMSE}
\begin{flalign}
\label{radarWMMSE_Rx}
\mathbf{U}^\star\rnr&=\arg\min_{\mathbf{U}\rnr,\forall n\rr}\trace\braces{\mathbf{W}\rnr\mathbf{E}\rnr}\nonumber\\
&\;\;\;\;\;\;\;\;\;\;=\boldsymbol{\Sigma}_{\target,n\rr}\mathbf{S}^\dagger_{\target,n\rr}\paren{\mathbf{S}_{\mathrm{t},n\rr}\boldsymbol{\Sigma}_{\target,n\rr}\mathbf{S}^\dagger_{\target,n\rr}+\mathbf{R}^{\mathrm{in}}_{\mathrm{r},n\rr}}^{-1},\\
\mathbf{U}^\star_{\textrm{u},i}\bracket{k}&=\arg\min_{\UiB,\forall i,k,l}\trace\braces{\WiB\EiB}\nonumber\\
&\;\;\;\;\;\;\;\;\;\;\;=\PiBH\HiBH\paren{\mathbf{R}_{\textrm{u},i}\bracket{k}}^{-1},\label{UL_WMMSE_Rx}\\
\text{and }\mathbf{U}^\star_{\textrm{d},j}\bracket{k}&=\arg\min_{\UBj,\forall j,k,l}\trace\braces{\WBj\mathbf{E}_{\textrm{d},j}\bracket{k}}\nonumber\\
&\;\;\;\;\;\;\;\;\;\;\;=\PBjH\mathbf{H}^\dagger_{\textrm{B},j}\paren{\Rjs}^{-1}.\label{DL_WMMSE_Rx}
\end{flalign}\normalsize
Substituting $\mathbf{U}^\star\rnr$, $\mathbf{U}^\star_{\textrm{u},i}\bracket{k}$, and $\mathbf{U}^\star_{\textrm{d},j}\bracket{k}$ into MIs in $\paren{\ref{radarmi}}$-$\paren{\ref{DLmutual}}$ and MSEs in $\paren{\ref{radarMSE}}$-$\paren{\ref{DLMSE}}$ yields, respectively, the achievable rates of $\ith{n\rr}$ radar Rx, $\ith{i}$ UL UE, and $\ith{j}$ DL UE as
\begin{align}
\mathit{R}_{\textrm{r},n\rr}&=	\log\left|\mathbf{I}+\mathbf{S}_{\mathrm{t},n\rr}\boldsymbol{\Sigma}_{\target,n\rr}\mathbf{S}^\dagger_{\target,n\rr}\mathbf{R}^{-1}_{\mathrm{in,n\rr}}\right|,\\
\mathit{R}_{\textrm{u},i}\bracket{k}&=\log\left|\mathbf{I}+\mathbf{R}_{i,\B}\bracket{k}\Riniin\right|, \label{UL_rate}
\end{align}
and
\begin{align}
\mathit{R}_{\textrm{d},j}\bracket{k}&=\log\left|\mathbf{I}+\mathbf{R}_{\textrm{B},j}\bracket{k}\Rinjins\right| \label{DL_rate}.
\end{align}
The corresponding minimum mean-squared-error (MMSE) of $\ith{n\rr}$ radar Rx, $\ith{i}$ UL UE, and $\ith{j}$ DL UE are, respectively,
\begin{align}
\mathbf{E}^{\star}\rnr&=\boldsymbol{\Sigma}_{\target,n\rr}\bracket{\mathbf{I}-\mathbf{S}^\dagger_{\target,n\rr}\paren{\mathbf{R}_{\mathrm{r},n\rr}}^{-1}\mathbf{S}_{\target,n\rr}\boldsymbol{\Sigma}_{\target,n\rr}},\label{radarMMSE} \\
\mathbf{E}^{\star}_{\mathrm{u},i}\bracket{k}&=\mathbf{I}-\PiBH\mathbf{H}^\dagger_{i,\textrm{B}}\paren{\Ri}^{-1}\mathbf{H}_{i,\textrm{B}}\PiB,\label{ULMMSE}
\end{align}
and
\begin{align}
\mathbf{E}^{\star}_{\mathrm{d},j}\bracket{k}&=\mathbf{I}-\PBjH\mathbf{H}^\dagger_{\textrm{B},j}\paren{\Rj}^{-1}\mathbf{H}_{\textrm{B},j}\PBj.\label{DLMMSE}
\end{align}
The \textit{data processing inequality} \cite[p.34]{cover2006elements} implies that $\mathit{R}_{\mathrm{r}, n\rr}$, $\mathit{R}_{\textrm{u},i}\bracket{k,l}$, and $\mathit{R}_{\textrm{d},j}\bracket{k,l}$ are the upper bounds of $\mathit{I}_{\mathrm{r},n\rr}$,  $\mathit{I}^{\textrm{u}}_{i}\bracket{k,l}$, and $\mathit{I}^{\textrm{d}}_{j}\bracket{k,l}$, for all $n\rr$, $i$, and $j$. It follows that
\begin{align}
\braces{\mathbf{U}^\star}\triangleq\braces{\mathbf{U}^\star\rnr, \mathbf{U}^\star_{\textrm{u},i}\bracket{k},\mathbf{U}^\star_{\textrm{d},j}\bracket{k},\forall \braces{n\rr,i,j}},
\end{align}
is also the optimal solution of \eqref{jointop_first} and, in turn, the original problem \eqref{jointop}, whose additional constraints do not affect the solution for $\braces{\mathbf{U}}$.  Using Woodbury matrix identity \cite{MSE_FD}, the achievable rates are  
\begin{flalign}
&\mathit{R}_{\mathrm{r},n\rr}=\log\left|\boldsymbol{\Sigma}_{\target,n\rr}\paren{\mathbf{E}^{\star}\rnr}^{-1}\right|, \;\mathit{R}_{\textrm{u},i}\bracket{k}=\log\left|\paren{\mathbf{E}^{\star}_{\mathrm{u},i}\bracket{k}}^{-1} \right|,\nonumber\\
\text{and }&\mathit{R}_{\textrm{d},j}\bracket{k}=\log\left| \paren{\mathbf{E}^{\star}_{\mathrm{d},j}\bracket{k}}^{-1}\right|.
\end{flalign}\normalsize
Applying the first order optimal condition \cite{Luo2011IterativeWMMSE,MSE_FD} w.r.t. $\braces{\mathbf{W}}$ produces the optimal weight matrices $\braces{\mathbf{W}^\star}$ as
\begin{align}
\mathbf{W}^\star_{\textrm{u},i}\bracket{k}=\paren{\mathbf{E}^\star_{\textrm{u},i}\bracket{k}}^{-1},\nonumber\\ \WBjop=\paren{\mathbf{E}^\star_{\textrm{d},j}\bracket{k}}^{-1},
\end{align}
and 
\begin{align}
\mathbf{W}^\star\rnr=\paren{\mathbf{E}^\star\rnr}^{-1},
\end{align}
for all $\braces{n\rr,i,j}$. Define
\begin{flalign}
&\Xi_{\text{wmse}}^\prime=\Xi_{\text{wmse}}-\sum_{n\rr=1}^{N\rr}\alpha^\textrm{r}_{n\rr}\paren{\boldsymbol{\Sigma}_{\target,n\rr}\log\left| \mathbf{W}\rnr\right|+\mathit{KM}}-\nonumber\\
&\sum_{k=0}^{K-1}\braces{\sum_{i=1}^\mathit{I}\alpha^\textrm{u}_i\paren{\log\left| \WiB\right|+\mathit{D}^{\textrm{u}}_{i}}+ \sum_{j=1}^\mathit{J}\alpha^\textrm{d}_j\paren{\log\left| \WBj\right|+\mathit{D}^{\textrm{d}}_{j}}}.\nonumber
\end{flalign}\normalsize
Substituting $\braces{\mathbf{W}^\star}$ and $\braces{\mathbf{U}^\star}$ in $\Xi_{\text{wmse}}^\prime$ results in
\begin{align}
&\Xi^\prime_{\textrm{wmse}}=
\sum_{k=0}^{K-1}\left\lbrace
\sizecorr{\sum_{i=1}^I\alpha^\textrm{u}_i\paren{\log\left| \WiB\right|-\mathit{D}^{\textrm{u}}_{i}}}
-\sum_{j=1}^{\mathit{J}}\alpha^\textrm{d}_j\log\left|\paren{\mathbf{E}^\star_{\mathrm{d},j}\bracket{k}}^{-1}\right|\right.\nonumber\\
&-\left.\sum_{i=1}^{\mathit{I}}\alpha^\textrm{u}_i\log\left|\paren{\mathbf{E}^{\star}_{\mathrm{u},i}\bracket{k}}^{-1}\right|\right\rbrace -\alpha\rr\sum_{n\rr=1}^{N\rr}\log\left|\paren{\mathbf{E}^\star\rnr}^{-1}\right|\nonumber\\
&=\sum_{k=1}^K\braces{\sum_{j=1}^J\alpha^\textrm{d}_j\mathit{R}_{\textrm{d},j}\bracket{k}+\sum_{i=1}^I\alpha^\textrm{u}_i\mathit{R}_{\textrm{u},i}\bracket{k}}+\sum_{n\rr=1}^{N\rr}\alpha^\textrm{r}_{n\rr} \mathit{R}\rnr\nonumber\\
&=-\mathit{I}_{\textrm{CWSM}}\left(\braces{\mathbf{U}^\star},\braces{\mathbf{P}},\mathbf{A}\right),
\end{align}\normalsize
which indicates that maximizing $\mathit{I}_{\textrm{CWSM}}$ is equivalent to minimizing $\Xi^\prime_{\textrm{wmse}}$ given  $\braces{\mathbf{U}^\star}$. As minimizing $\Xi^\prime_{\textrm{wmse}}$ w.r.t. $\braces{\mathbf{P}}$ and $\mathbf{A}$ is also equivalent to minimizing $\Xi_{\textrm{wmse}}$ given $\braces{\mathbf{U}^\star}$ and $\braces{\mathbf{W}^\star}$. 
This completes the proof. 
\end{proof}	
Substituting $\braces{\mathbf{U}^\star}$ and $\braces{\mathbf{W}^\star}$ in $\paren{\ref{WMMSE1}}$ yields the WMMSE 
\begin{align}
\Xi_{\textrm{wmmse}}\paren{\braces{\mathbf{P}},\mathbf{A}}\triangleq\Xi_{\textrm{wmse}}\paren{\braces{\mathbf{U}^\star}, \braces{\mathbf{W}^\star} \braces{\mathbf{P}}}.
\end{align}
Therefore, $\mathbf{A}$ and $\braces{\mathbf{P}}$ are obtained by solving 
\begin{equation}
\label{WMMSE2}
\underset{{\braces{\mathbf{P}},\mathbf{A}}}{\textrm{minimize}}\quad\Xi_{\textrm{wmmse}}\paren{\braces{\mathbf{P}},\mathbf{A}}\;\textrm{subject to}\quad \eqref{DL_power}-\eqref{DLrate}. 
\end{equation}\normalsize

\subsection{WMMSE-MRMC} \label{subsec:seq}
In order to solve \eqref{WMMSE2}, we sequentially iterate over each element in $\braces{\mathbf{P}}$ and each row of $\mathbf{A}$, i.e., $\mathbf{a}^\top\bracket{k}$,
using the Lagrange dual method to find a closed-form solution to each variable, which constitutes our WMMSE-MRMC algorithm. 
 
\subsubsection{Lagrange dual solution}Denote the Lagrange multiplier vectors w.r.t. constraints \eqref{DL_power}-\eqref{DLrate}, respectively, as 
\begin{align}
\boldsymbol{\lambda}_{\textrm{DL}} &\triangleq\bracket{\lambda_\textrm{d}\bracket{0},\cdots,\lambda_\textrm{d}\bracket{K-1}}^\top\in\mathbb{R}^{K},\\
\boldsymbol{\lambda}_{\textrm{UL}} &\triangleq\bracket{\lambda_{\textrm{u},1}\bracket{0},\cdots,\lambda_{\textrm{u},I}\bracket{K-1}}^\top\in\mathbb{R}^{KI},\\ \boldsymbol{\mu}_{\textrm{DL}} &\triangleq\bracket{\mu_{\textrm{d},1}\bracket{0},\cdots,\mu_{\textrm{d},J}\bracket{K-1}}^\top\in\mathbb{R}^{KJ},
\end{align}
and 
\begin{align}
\boldsymbol{\mu}_{\text{UL}} &\triangleq\bracket{\mu_{\textrm{u},1}\bracket{0},\cdots,\mu_{\textrm{u},I}\bracket{K-1}}^\top\in\mathbb{R}^{KI},
\end{align}
as well as the UL power vector, the DL power vector, the UL rate vector, and the DL rate vector as, respectively,
\begin{align}
\mathbf{p}_{\textrm{UL}} &\triangleq\bracket{P_{\textrm{u},1}\bracket{0},\cdots,P_{\textrm{u,}I}\bracket{K-1}}^\top\in\mathbb{R}^{KI}, \\ 
\mathbf{p}_{\textrm{DL}} &\triangleq \bracket{P_{\mathrm{d}}\bracket{0},\cdots,P_{\mathrm{d}}\bracket{K-1}}^\top\in\mathbb{R}^{K}, \\
\mathbf{r}_{\textrm{UL}} &\triangleq\bracket{\mathit{R}_{\textrm{u},1}\bracket{0},\cdots,\mathit{R}_{\textrm{u},I}\bracket{K-1}}^\top\in\mathbb{R}^{KI}, 
\end{align}
and
\begin{align}
\mathbf{r}_{\textrm{DL}} &\triangleq \bracket{\mathit{R}_{\textrm{d},1}\bracket{0},\cdots,\mathit{R}_{\textrm{d},J}\bracket{K-1}}^\top\in\mathbb{R}^{KJ},
\end{align} 
which lead to the Lagrangian associated with \eqref{WMMSE2} as
\begin{flalign}
\label{Lagrange}
&\mathcal{L}\paren{\braces{\mathbf{P}},\mathbf{A},\boldsymbol{\lambda},\boldsymbol{\mu}}=\Xi_{\textrm{wmmse}}+\boldsymbol{\lambda}_{\textrm{DL}}^\top\paren{\mathbf{p}_{\textrm{DL}}-P_\textrm{B}\mathbf{1}}+\boldsymbol{\lambda}_{\textrm{UL}}^\top\paren{\mathbf{p}_{\textrm{UL}}-P_\textrm{U}\mathbf{1}}\nonumber\\
&-\boldsymbol{\mu}_{\textrm{DL}}^\top\paren{\mathbf{r}_{\textrm{DL}}-\mathit{R}_{\textrm{DL}}\mathbf{1}}-\boldsymbol{\mu}_{\textrm{UL}}^\top\paren{\mathbf{r}_{\textrm{UL}}-\mathit{R}_{\textrm{UL}}\mathbf{1}},
\end{flalign}\normalsize
where $\boldsymbol{\lambda}=\bracket{\boldsymbol{\lambda}^\top_{\text{DL}},\boldsymbol{\lambda}^\top_{\text{UL}}}^\top$ and $\boldsymbol{\mu}=\bracket{\boldsymbol{\mu}^\top_{\text{DL}},\boldsymbol{\mu}^\top_{\text{UL}}}^\top$. The Lagrange dual function of $L\paren{\cdot}$ is defined as
$D\paren{\boldsymbol{\lambda},\boldsymbol{\mu}}=\underset{\braces{\mathbf{P}},\mathbf{A}}\inf \mathcal{L}\paren{\braces{\mathbf{P}},\mathbf{A},\boldsymbol{\lambda},\boldsymbol{\mu}}$. 
With these definitions, we state the following theorem to solve $\paren{\ref{WMMSE2}}$.
\begin{prop}
\label{theorem: dual}
Linearize, using Taylor series approximation, $R_{\textrm{u},i}\bracket{k}$ and $R_{\textrm{d},j}\bracket{k}$, $\forall\braces{i,j,k}$ in the QoS constraints of problem $\eqref{WMMSE2}$. Then, the solution of the resulting problem is equivalent to that of its Lagrange dual problem 
\begin{equation}
\label{dualproblem}
\text{maximize} \quad D\paren{\boldsymbol{\lambda},\boldsymbol{\mu}}\text{  subject to}\quad  \boldsymbol{\lambda}  \succeq \mathbf{0}, \boldsymbol{\mu} \succeq \mathbf{0}.    
\end{equation}
\end{prop}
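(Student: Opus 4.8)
The plan is to recognize that, with the receive filters and weight matrices held fixed at their current values, the linearized version of \eqref{WMMSE2} is a convex program, and then invoke Lagrangian strong duality under Slater's condition. First I would freeze $\braces{\mathbf{U}}=\braces{\mathbf{U}^\star}$ and $\braces{\mathbf{W}}=\braces{\mathbf{W}^\star}$ (as the BCD step does), so that the objective reduces to $\Xi_{\textrm{wmmse}}(\braces{\mathbf{P}},\mathbf{A})=\Xi_{\textrm{wmse}}(\braces{\mathbf{U}^\star},\braces{\mathbf{W}^\star},\braces{\mathbf{P}},\mathbf{A})$, and verify that it is jointly convex in $(\braces{\mathbf{P}},\mathbf{A})$: inspecting \eqref{WEi}--\eqref{WEr}, every term is either affine in $(\braces{\mathbf{P}},\mathbf{A})$ or a quadratic form $\trace\braces{\mathbf{M}\mathbf{P}\mathbf{P}^\dagger}=\lVert\mathbf{M}^{1/2}\mathbf{P}\rVert_F^2$ (and analogously $\lVert\mathbf{N}^{1/2}\mathbf{a}\bracket{k}\rVert^2$ for the radar code) whose kernels $\mathbf{M},\mathbf{N}\succeq\mathbf{0}$ inherit positive semidefiniteness from $\mathbf{W}^\star\succeq\mathbf{0}$. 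The power expressions \eqref{eq: DL_power}--\eqref{eq: UL_power} are likewise convex quadratics in $\braces{\mathbf{P}}$.

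Next I would perform the linearization stated in the proposition: replace $\rateul$ and $\ratedl$ in the QoS constraints by their first-order Taylor expansions about the current iterate $\paren{\braces{\mathbf{P}^{(t)}},\mathbf{A}^{(t)}}$. Because the closed forms \eqref{UL_rate} and \eqref{DL_rate} give explicit gradients, the constraints $\rateul\ge\mathit{R}_{\textrm{UL}}$ and $\ratedl\ge\mathit{R}_{\textrm{DL}}$ become affine in $(\braces{\mathbf{P}},\mathbf{A})$. Combined with the previous step, the linearized \eqref{WMMSE2} is a convex quadratically constrained quadratic program: convex quadratic objective, convex quadratic power caps, affine QoS constraints.

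I would then establish a constraint qualification and close the argument with strong duality. The only non-affine constraints are the convex quadratic power caps \eqref{DL_power} and $P_{\mathrm{u},i}\bracket{k}\le P_\textrm{U}$; assuming strict feasibility --- precoders exist with $P_{\mathrm{d}}\bracket{k}<P_\textrm{B}$, $P_{\mathrm{u},i}\bracket{k}<P_\textrm{U}$ that satisfy the (now affine) linearized QoS constraints --- Slater's condition holds, so strong duality applies to the convex QCQP. Hence the optimal value of \eqref{dualproblem} equals that of the linearized \eqref{WMMSE2}, the dual supremum is attained at some $\paren{\boldsymbol{\lambda}^\star,\boldsymbol{\mu}^\star}\succeq\mathbf{0}$, and, by convexity of $\mathcal{L}\paren{\braces{\mathbf{P}},\mathbf{A},\boldsymbol{\lambda}^\star,\boldsymbol{\mu}^\star}$ in $(\braces{\mathbf{P}},\mathbf{A})$, any minimizer of the Lagrangian that is primal-feasible and complementary-slack is primal-optimal. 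Since the inner minimization over $(\braces{\mathbf{P}},\mathbf{A})$ admits the closed form derived in the sequel, a primal optimum is recovered directly from a dual optimum, which is the asserted equivalence.

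I expect the main obstacle to be the constraint qualification rather than the duality argument itself: one must guarantee that after linearization the feasible region still contains a point strictly interior to the power balls, which implicitly couples the thresholds $\mathit{R}_{\textrm{UL}},\mathit{R}_{\textrm{DL}}$ to the budgets $P_\textrm{B},P_\textrm{U}$ and to the channel realizations, and also that the linearized QoS set remains nonempty. A secondary, more tedious point is the joint convexity of $\Xi_{\textrm{wmmse}}$ in $(\braces{\mathbf{P}},\mathbf{A})$: in the radar term \eqref{WEr}, where $\mathbf{A}$ enters both the useful-signal covariance and several interference covariances, one must check that every quadratic block retains a positive semidefinite Gram structure, since only then does the convex-QCQP classification --- and hence the appeal to Slater and strong duality --- go through rigorously.
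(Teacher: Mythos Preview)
Your overall strategy---linearize the QoS constraints, then argue convexity to invoke strong duality---matches the paper's. The key divergence is in the convexity claim: you assert (and flag as needing verification) that $\Xi_{\textrm{wmmse}}$ is \emph{jointly} convex in $(\braces{\mathbf{P}},\mathbf{A})$ once $\braces{\mathbf{U}},\braces{\mathbf{W}}$ are frozen, whereas the paper explicitly states it is only \emph{multi-convex}---convex in each block with the others fixed, but not jointly. The paper's argument therefore leans on the surrounding BCD structure: after the Taylor linearization, each per-block subproblem is convex, and it is for those subproblems that the duality gap vanishes. Your route, if joint convexity survives, is cleaner and yields a stronger statement (one convex QCQP rather than a family of them); the paper's route avoids having to verify the full Hessian but only justifies strong duality blockwise. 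You are also more careful than the paper about the constraint qualification (Slater) and about recovering the primal from the dual; the paper simply asserts ``fully convex approximation $\Rightarrow$ zero gap'' without naming Slater or discussing strict feasibility. The obstacle you correctly anticipate---checking that the radar MSE term \eqref{WEr}, with its cross-PRI couplings between $\mathbf{a}\bracket{k}$ and $\mathbf{a}\bracket{m}$ and between different DL precoders, remains a PSD quadratic form in the stacked variable---is exactly the point at which the two arguments diverge, and the paper's explicit denial of joint convexity suggests you should not expect that verification to succeed without falling back to the per-block reading.
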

\begin{proof}
Solving $\paren{\ref{dualproblem}}$ yields the lower bounds of $\paren{\ref{WMMSE2}}$. The difference between the lower bound and the actual optimal value is the optimal duality gap. To equivalently obtain $\braces{\mathbf{P}^\star}$ and $\mathbf{A}^\prime$ with $\paren{\ref{dualproblem}}$, strong duality ought to hold for the primal problem $\paren{\ref{WMMSE1}}$, i.e., the optimal duality gap should be zero. However, the QoS constraints \eqref{DLrate} are non-concave leading to a non-zero duality gap. To bypass this problem, we apply the Taylor series to obtain linear approximations of $\mathit{R}_{\textrm{u},i}\bracket{k}$ and $\mathit{R}_{\textrm{d},j}\bracket{k}$. The first-order Taylor series expansion of a real-valued function with complex-valued matrix arguments $f\paren{\mathbf{X},\mathbf{X}^\ast}: \mathbb{C}^{N\times \mathrm{Q}}\times\mathbb{C}^{N\times \mathrm{Q}}\rightarrow\mathbb{R}$ around $\mathbf{X}_{\mathrm{0}}$ yields \cite{MSE_FD} 
\begin{flalign}
\label{eq: Taylor}
f\paren{\mathbf{X},\mathbf{X}^\ast} &= f\paren{\mathbf{X}_{\mathrm{0}},\mathbf{X}^\ast_{\mathrm{0}}}+\vect^\top\paren{\frac{\partial }{\partial \mathbf{X}_{\mathrm{0}}}f\paren{\mathbf{X}_{\mathrm{0}},\mathbf{X}^\ast_{\mathrm{0}}}}\vect\paren{\mathbf{X}-\mathbf{X}_{\mathrm{0}}}\nonumber\\
&+\vect^\top\paren{\frac{\partial }{\partial \mathbf{X}^\ast_{\mathrm{0}}}f\paren{\mathbf{X}_{\mathrm{0}},\mathbf{X}^\ast_{\mathrm{0}}}}\vect\paren{\mathbf{X}^\ast-\mathbf{X}^\ast_{\mathrm{0}}}.
\end{flalign}\normalsize
based on their associated Taylor series expansions in the initial approximations $\widetilde{\mathbf{P}}_{\textrm{u},i}\bracket{k}$, $\widetilde{\mathbf{P}}_{\textrm{d},j}\bracket{k}$, and $\widetilde{\mathbf{a}}\bracket{k}$. Denoting $\widetilde{\mathbf{P}}_{\textrm{u},i}\bracket{k}$ as an initial approximation of $\PiB$, the Taylor series expansions of $\mathit{R}_{\textrm{u},q}\bracket{k}$ at $\widetilde{\mathbf{P}}_{\textrm{u},i}\bracket{k}$ are 
\begin{align}
\mathit{R}_{\textrm{u},q}\bracket{k}\paren{\PiB}&\approx \mathit{R}_{\textrm{u},q}\bracket{k}\paren{\widetilde{\mathbf{P}}_{\textrm{u},i}\bracket{k}}\nonumber\\
&+\vect\braces{\frac{\partial \paren{\mathit{R}_{\textrm{u},q}\bracket{k}\paren{\widetilde{\mathbf{P}}_{\textrm{u},i}\bracket{k},\widetilde{\mathbf{P}}^\ast_{\textrm{u},i}\bracket{k}}}}{\partial \widetilde{\mathbf{P}}_{\textrm{u},i}\bracket{k}}}^\top\vect\braces{\PiB-\widetilde{\mathbf{P}}_{\textrm{u},i}\bracket{k}}\nonumber\\
&+\vect\braces{\frac{\partial \paren{\mathit{R}_{\textrm{u},q}\bracket{k}\paren{\widetilde{\mathbf{P}}_{\textrm{u},i}\bracket{k},\widetilde{\mathbf{P}}^\ast_{\textrm{u},i}\bracket{k}}}}{\partial \widetilde{\mathbf{P}}^\ast_{\textrm{u},i}\bracket{k}}}^\top \vect\braces{\mathbf{P}^\ast_{\textrm{u},i}\bracket{k}-\widetilde{\mathbf{P}}^\ast_{\textrm{u},i}\bracket{k}}.\nonumber
\end{align}
Likewise, by defining $\widetilde{\mathbf{P}}_{\textrm{d},j}\bracket{k}$ and $\widetilde{\mathbf{a}}\bracket{k}$ as initial approximations of $\PBj$, and $\mathbf{a}\bracket{k}$, we find the linear approximations of $\mathit{R}_{\textrm{u},q}\bracket{k}\paren{\PBj}$, $\mathit{R}_{\textrm{u},q}\bracket{k}\paren{\mathbf{a}\bracket{k}}$, $\mathit{R}_{\textrm{d},j}\bracket{k}\paren{\PiB}$, $\mathit{R}_{\textrm{d},m}\bracket{k}\paren{\PBj}$, and $\mathit{R}_{\textrm{d},j}\bracket{k}\paren{\mathbf{a}\bracket{k}}$. As $\Xi_{\text{wmmse}}$ is multi-convex, namely that $\Xi_{\text{wmmse}}$ is not jointly convex in $\PiB$, $\PBj$, and $\mathbf{a}\bracket{k}$ but convex in each individual variable provided the rest of the variables are fixed \cite{FD_WMMSE,BCDconvergence}, we thus have a fully convex approximation of \eqref{WMMSE2} in \eqref{dualproblem} and reduce the optimal duality gap to zero \cite{Lui2006subg}. This concludes the proof.
\end{proof}
Following Appendix~\ref{app:grad}, the gradients of Lagrangian $\mathcal{L}\paren{\cdot}$ w.r.t. $\PiB$, $\PBj$, and $\mathbf{a}\bracket{k}$ are, respectively,  
\begin{flalign}
\nabla_{\PiB}\mathcal{L}&=\nabla_{\PiB}\Xi_{\text{UL}}+\nabla_{\PiB}\Xi_{\text{DL}}+\nabla_{\PiB}\Xi_{\text{r}}\nonumber\\
&+\lambda_{\textrm{u},i}\bracket{k}\PiB-\sum_{g=1}^{J}\mu_{\textrm{d},g}\bracket{k}\nabla_{\PiB}\mathit{R}_{\textrm{d},g}\bracket{k}-\sum_{q=1}^{\mathit{I}}\mu_{\textrm{u},q}\bracket{k}\nabla_{\PiB}\mathit{R}_{\textrm{u},q}\bracket{k},\nonumber\\
\nabla_{\PBj}\mathcal{L}&=\nabla_{\PBj}\Xi_{\text{UL}}+\nabla_{\PBj}\Xi_{\text{DL}}+\nabla_{\PBj}\Xi_{\text{r}}\nonumber\\
&+\lambda_{\textrm{d}}\bracket{k}\PBj-\sum_{g=1}^{J}\mu_{\textrm{d},g}\bracket{k}\nabla_{\PBj}\mathit{R}_{\textrm{d},g}\bracket{k}-\sum_{q=1}^{\mathit{I}}\mu_{\textrm{u},q}\bracket{k}\nabla_{\PBj}\mathit{R}_{\textrm{u},q}\bracket{k},\nonumber\\
\text{and }\nabla_{\mathbf{a}\bracket{k}}\mathcal{L}&=\nabla_{\mathbf{a}\bracket{k}}\Xi_{\text{UL}}+\nabla_{\mathbf{a}\bracket{k}}\Xi_{\text{DL}}+\nabla_{\mathbf{a}\bracket{k}}\Xi  _{\text{r}}-\sum_{j=1}^{\mathit{J}}\mu_{\textrm{d},j}\bracket{k}\nabla_{\mathbf{a}\bracket{k}}R_{\textrm{d},j}\bracket{k}-\sum_{i=1}^{\mathit{I}}\mu_{\textrm{u},i}\bracket{k}\nabla_{\mathbf{a}\bracket{k}}R_{\textrm{u},i}\bracket{k}.\nonumber
\end{flalign}\normalsize
We obtain the analytical solutions to $\PiB$, $\PBj$, and $\mathbf{a}\bracket{k}$ by solving equations $\triangledown_{\mathbf{P}_{\textrm{u},i}\bracket{k}}L=0$, $\triangledown_{\mathbf{P}_{\textrm{d},j}\bracket{k}}L=0$, and $\triangledown_{\mathbf{a}\bracket{k}}L=0$, respectively.
Using the gradient expressions in Appendix~\ref{app:grad}, some  algebra  yields three generalized Sylvester equations
\begin{align}
\mathbf{A}_{\textrm{u},i}\bracket{k}\mathbf{P}_{\textrm{u},i}\bracket{k}+\mathbf{F}_{\textrm{u},i}\bracket{k}\mathbf{P}_{\textrm{u},i}\bracket{k}\mathbf{B}_{\textrm{u,}i}\bracket{k}=\mathbf{C}_{\textrm{u,}i}\bracket{k},\\
\mathbf{A}_{\textrm{d},j}\bracket{k}\mathbf{P}_{\textrm{d},j}\bracket{k}+\mathbf{F}_{\textrm{Bt}}\bracket{k} \mathbf{P}_{\textrm{d},j}\bracket{k}\mathbf{B}_{\textrm{Bt},j}\bracket{k}+\mathbf{F}_{\textrm{Bm}}\bracket{k}\mathbf{P}_{\textrm{d},j}\bracket{k}\mathbf{B}_{\textrm{Bm},j}\bracket{k}
&=\mathbf{C}_{\textrm{d,}j}\bracket{k},\\
\textrm{and }
\mathbf{A}_{\textrm{r}}\bracket{k}\mathbf{a}\bracket{k}+\mathbf{F}_{\textrm{r}}\bracket{k}\mathbf{a}\bracket{k}&=\mathbf{c}_{\textrm{r}}\bracket{k},
\end{align}
where 
\begin{align}
\mathbf{B}_{\textrm{u,}i}\bracket{k} &=\mathbf{d}_{\mathrm{u},i}\bracket{k,n_\target-n_{\textrm{u}}} \mathbf{d}^\dagger_{\textrm{u},i}\bracket{k,n_\target-n_{\textrm{u}}},\nonumber\\ 
\mathbf{F}_{\textrm{u},i}\bracket{k}&= 2\sum_{n\rr=1}^{N\rr}\mathrm{Re}\paren{\mathrm{\xi}_{\mathrm{r},n\rr}\paren{k,k}\boldsymbol{\Sigma}^{\paren{k,k}}_{i,n\rr}},\nonumber\\ 
\mathbf{B}_{\textrm{Bt},j}\bracket{k}&=\mathbf{d}_{\mathrm{d},j}\bracket{k,0}\mathbf{d}^\dagger_{\textrm{d},j}\bracket{k,0},\nonumber\\ \mathbf{B}_{\textrm{Bm},j}\bracket{k}&=\mathbf{d}_{\textrm{d},j}\bracket{k,n_\target-n_{\textrm{Bm}}} \mathbf{d}^\dagger_{\textrm{d},j}\bracket{k,n_\target-n_{\textrm{Bm}}},\nonumber\\ 
\mathbf{F}_{\textrm{Bt}}\bracket{k}&=2\sum_{n\rr=1}^{N\rr}\mathrm{Re}\paren{\mathrm{\xi}_{\mathrm{r},n\rr}\paren{k,k}\boldsymbol{\Sigma}^{\paren{k,k}}_{\mathrm{Bt},n\rr}},\nonumber\\
\mathbf{F}_{\textrm{Bm},j}\bracket{k}&=2\sum_{n\rr=1}^{N\rr}\mathrm{Re}\paren{\mathrm{\xi}_{\mathrm{r},n\rr}\paren{k,k}\boldsymbol{\Sigma}^{\paren{k,k}}_{\mathrm{Bm},n\rr}},\nonumber\\  
\mathbf{F}_{\textrm{r}}\bracket{k}&=2\sum_{n\rr=1}^{N\rr}\bracket{\mathrm{Re}\paren{\mathrm{\xi}_{\mathrm{r},n\rr}\paren{k,k}\boldsymbol{\Sigma}^{\paren{k,k}}_{\mathrm{rt},n\rr}}+\mathrm{Re}\paren{\mathrm{\xi}_{\mathrm{r},n\rr}\paren{k,k}\boldsymbol{\Sigma}_{\textrm{c},n\rr}}},\nonumber\\
\mathbf{F}_{\textrm{r}}\bracket{k}&=2\sum_{n\rr=1}^{N\rr}\bracket{\mathrm{Re}\paren{\mathrm{\xi}_{\mathrm{r},n\rr}\paren{k,k}\boldsymbol{\Sigma}^{\paren{k,k}}_{\mathrm{rt},n\rr}}+\mathrm{Re}\paren{\mathrm{\xi}_{\mathrm{r},n\rr}\paren{k,k}\boldsymbol{\Sigma}_{\textrm{c},n\rr}}},\nonumber\\
\mathbf{A}_{\textrm{u,}i}\bracket{k}&=2\HiBH\boldsymbol{\xi}_{\textrm{UL}}\bracket{k}\mathbf{H}_{i,\B}+2\sum_{g=1}^\mathit{J}\mathbf{H}^\dagger_{i,g}\boldsymbol{\xi}_{\textrm{d},g}\bracket{k}\mathbf{H}_{i,g}+\lambda_{\textrm{u},i}\bracket{k}\mathbf{I},\nonumber\\
\mathbf{C}_{\textrm{u,}i}\bracket{k}&= \sum_{q=1}^{\mathit{I}}\mu_{\textrm{u},q}\bracket{k}\nabla_{\PiB}\mathit{R}_{\textrm{u},q}\bracket{k}+\sum_{g=1}^{J}\mu_{\textrm{d},g}\bracket{k}\nabla_{\PiB}\mathit{R}_{\textrm{d},g}\bracket{k}\nonumber\\
&-\sum_{n\rr=1}^{N\rr}\sum_{m\neq k}^{\mathrm{\mathit{K}}}2\mathrm{Re}\paren{\mathrm{\xi}_{\mathrm{r},n\rr}\paren{k,m}\boldsymbol{\Sigma}^{\paren{m,k}}_{i,n\rr}}\mathbf{P}_{\textrm{u},i}\bracket{m}\mathbf{d}_{\textrm{u},i}\bracket{m,n_\target-n_{\textrm{u}}}\mathbf{d}^\dagger_{\textrm{u},i}\bracket{k,n_{\textrm{t}}-n_{\textrm{u}}}+2\alpha^{\textrm{u}}_i\HiBH\UiBH\WiB,\nonumber
\end{align}
\begin{flalign}
\mathbf{A}_{\textrm{d,}j}\bracket{k}&=2\HBBH\boldsymbol{\xi}_{\textrm{UL}}\bracket{k}\HBB+2\sum_{g=1}^{\mathit{J}}\mathbf{H}^\dagger_{\B,g}\boldsymbol{\xi}_{\textrm{d},g}\bracket{k}\mathbf{H}_{\B,g}+\lambda_\textrm{d}\bracket{k}\mathbf{I},\nonumber
\end{flalign}
\begin{flalign}
\mathbf{C}_{\textrm{d,}j}\bracket{k}&= \sum_{q=1}^{\mathit{I}}\mu_{\textrm{u},q}\bracket{k}\nabla_{\PBj}\mathit{R}_{\textrm{u},q}\bracket{k}+\sum_{g=1}^{\mathit{J}}\mu_{\textrm{d},g}\bracket{k}\nabla_{\PBj}\mathit{R}_{\textrm{d},g}\bracket{k}-\nonumber\\
&\sum_{n\rr=1}^{N\rr}\left\lbrace\sum_{m\neq k}2\mathrm{Re}\paren{\mathrm{\xi}_{\mathrm{r},n\rr}\paren{m,k}\boldsymbol{\Sigma}^{\paren{k,m}}_{\mathrm{Bt},n\rr}}\sum_{g=1 }^J\mathbf{P}_{\textrm{d},g}\bracket{m}\mathbf{d}_{\mathrm{d
},g}\bracket{m,0}\mathbf{d}^\dagger_{\textrm{d},j}\bracket{k,0}\right.\nonumber\\
&+\sum_{m\neq k}2\mathrm{Re}\paren{\mathrm{\xi}_{\mathrm{r},n\rr}\paren{m,k}\boldsymbol{\Sigma}^{\paren{k,m}}_{\mathrm{Bm},n\rr}}\sum_{g=1 }^J\mathbf{P}_{\textrm{d},g}\bracket{m}\mathbf{d}_{\mathrm{d},g}\bracket{m,n_\target-n_{\textrm{Bm}}}\mathbf{d}^\dagger_{\textrm{d},j}\bracket{k,n_{\textrm{t}}-n_{\textrm{Bm}}}\nonumber\\
&+\sum_{g\neq j}2\mathrm{Re}\paren{\mathrm{\xi}_{\mathrm{r},n\rr}\paren{k,k}\boldsymbol{\Sigma}^{\paren{k,k}}_{\mathrm{Bt},n\rr}}\mathbf{P}_{\textrm{d},g}\bracket{k}\mathbf{d}_{\mathrm{d},g}\bracket{k,0}\mathbf{d}^\dagger_{\textrm{d},j}\bracket{k,0}\nonumber\\
&+\sum_{g\neq j}2\mathrm{Re}\paren{\mathrm{\xi}_{\mathrm{r},n\rr}\paren{k,k}\boldsymbol{\Sigma}^{\paren{k,k}}_{\mathrm{Bm},n\rr}}\mathbf{P}_{\textrm{d},g}\bracket{k}\mathbf{d}_{\mathrm{d},g}\bracket{k,n_\target-n_{\textrm{Bm}}}\mathbf{d}^\dagger_{\textrm{d},j}\bracket{k,n_{\textrm{t}}-n_{\textrm{Bm}}}\nonumber\\
&\left.-2\alpha^{\textrm{r}}_{n\rr}\sum_{m=1}^{K}\boldsymbol{\Sigma}^{\paren{m,k}}_{\mathrm{Bt},n\rr}\mathbf{J}^\top_{\mathrm{B}}\mathbf{J}^\top_{\mathrm{h}}\bracket{m}\Wrnr\urk\mathbf{d}^\dagger_{\textrm{d},j}\bracket{k,0}\right\rbrace+2\alpha^{\textrm{d}}_j\HBjH\UBjH\WBj,\nonumber\\
\mathbf{A}_{\textrm{r}}\bracket{k}&=2\HrBH\boldsymbol{\xi}_{\textrm{UL}}\bracket{k}\HrB+2\sum_{j=1}^{\mathit{J}}\HrjH\boldsymbol{\xi}_{\textrm{d},j}\bracket{k}\Hrj,\nonumber\\
\text{and }\mathbf{c}_{\textrm{r}}\bracket{k}&= \sum_{i=1}^{\mathit{I}}\mu_{\textrm{u},i}\bracket{k}\nabla_{\mathbf{a}\bracket{k}}\mathit{R}_{\textrm{u},i}\bracket{k}+\sum_{j=1}^{\mathit{J}}\mu_{\textrm{d},j}\bracket{k}\nabla_{\mathbf{a}\bracket{k}}\mathit{R}_{\textrm{d},j}\bracket{k}+2\sum_{n\rr=1}^{N\rr}\alpha_{\textrm{r},n\rr}\sum_{m=1}^{K}\mathrm{Re}\paren{\boldsymbol{\Sigma}^{\paren{m,k}}_{\mathrm{rt},n\rr}}\mathbf{J}^\top\rr\mathbf{J}^\top_{\mathrm{h}}\bracket{m}\Wrnr\urk\nonumber\\
&-2\sum_{n\rr=1}^{N\rr}\sum_{m\neq k}\bracket{\mathrm{Re}\paren{\mathrm{\xi}_{\mathrm{r},n\rr}\paren{k,m}\boldsymbol{\Sigma}^{\paren{m,k}}_{\mathrm{rt},n\rr}}+\mathrm{Re}\paren{\mathrm{\xi}_{\mathrm{r},n\rr}\paren{k,m}\boldsymbol{\Sigma}_{\textrm{c},n\rr}}}\mathbf{a}\bracket{m}.\nonumber
\end{flalign}
\normalsize
Solving these Sylvester equations yields \cite{MSE_FD}
\begin{subequations}
\begin{align}
\vect\paren{\mathbf{P}^\star_{\textrm{u},i}\bracket{k}}&=\bracket{\mathbf{I}_{D^{\textrm{u}}_{i}}\otimes\mathbf{A}_{\textrm{u},i}\bracket{k}+\mathbf{B}^\top_{\textrm{u},i}\bracket{k}\otimes\mathbf{F}_{\textrm{u},i}\bracket{k}}^{-1}\vect\paren{\mathbf{C}_{\textrm{u,}i}\bracket{k}},\label{PiB}\\
\vect\paren{\mathbf{P}^\star_{\textrm{d},j}\bracket{k}}&=\left(\mathbf{I}\otimes\mathbf{A}_{\mathrm{d},j}\bracket{k}+\mathbf{B}^\top_{\textrm{Bt},j}\bracket{k}\otimes\mathbf{F}_{\textrm{Bt},j}\bracket{k}+\mathbf{B}^\top_{\textrm{Bm},j}\bracket{k}\otimes\mathbf{F}_{\textrm{Bm},j}\bracket{k}\right)^{-1}\vect\paren{\mathbf{C}_{\textrm{d,}j}\bracket{k}},\label{PBj}
\\
\textrm{and\;}\mathbf{a}^\prime\bracket{k}&=\bracket{1\otimes\mathbf{A}_{\textrm{r}}\bracket{k}+1\otimes\mathbf{F}_{\textrm{r}}\bracket{k}}^{-1}\mathbf{c}_{\textrm{r}}\bracket{k},\label{ak}
\end{align}
\end{subequations}
\normalsize
for all $i,j,k$. In order to determine $\mathbf{P}^\star_{\textrm{u},i}\bracket{k}$, $\mathbf{P}^\star_{\textrm{d},j}\bracket{k}$, and $\mathbf{a}^\prime\bracket{k}$, we need to find the optimal $\boldsymbol{\lambda}$ and $\boldsymbol{\mu}$ denoted by 
$\boldsymbol{\lambda}^\star$ and $\boldsymbol{\mu}^\star$. 
\color{black}
\subsubsection{Sub-gradient method for precoders}As $D\paren{\lambda,\mu}$ is not always differentiable \cite{Lui2006subg} and a simple method to update $\boldsymbol{\lambda}$ and $\boldsymbol{\mu}$ is needed, we resort to the subgradient to determine search directions for $\boldsymbol{\lambda}$ and $\boldsymbol{\mu}$, and employ the projected subgradient method  to solve \eqref{dualproblem}  sequentially. In the $\ith{t}$ iteration we have, 
\begin{subequations}
\begin{align}
&\lambda^{\paren{t+1}}_{\textrm{u},i}\bracket{k} =\bracket{\lambda^{\paren{t}}_{\textrm{u},i}\bracket{k}+\mathrm{\beta}^{\paren{t}}_{\textrm{u},i}\bracket{k}\paren{P^{\paren{t}}_{\textrm{u},i}\bracket{k}-P_\textrm{U}}}^+,\label{lambda_UL}
\end{align}
\begin{align}
&\lambda^{\paren{t+1}}_{\textrm{d}}\bracket{k} =\bracket{\lambda^{\paren{t}}_{\textrm{d}}\bracket{k}+\beta^{\paren{t}}_{\textrm{d}}\bracket{k}\paren{P^{\paren{t}}_{\textrm{B}}\bracket{k}-P_\textrm{B}}}^+,\label{lambda_DL}
\end{align}
\begin{align}
&\mu^{\paren{t+1}}_{\textrm{u},i}\bracket{k} = \bracket{\mu^{\paren{t}}_{\textrm{u},i}\bracket{k}+\varepsilon^{\paren{t}}_{\textrm{u},i}\bracket{k}\paren{\mathit{R}_{\textrm{UL}}-\mathit{R}^{\paren{t}}_{\textrm{u},i}\bracket{k}}}^+,	\label{mu_UL}
\end{align}
\begin{align}
\text{and }&\mu^{\paren{t+1}}_{\textrm{d},j}\bracket{k} = \bracket{\mu^{\paren{t}}_{\textrm{d},j}\bracket{k}+\varepsilon^{\paren{t}}_{\textrm{d},j}\bracket{k}\paren{\mathit{R}_\textrm{DL}-\mathit{R}^{\paren{t}}_{\textrm{d},j}\bracket{k}}}^+,\label{mu_DL} 
\end{align}
\end{subequations}\normalsize
where  $\beta^{\paren{t}}_{\textrm{u},i}\bracket{k}$, $\beta^{\paren{t}}_{\textrm{d}}\bracket{k}$, $\varepsilon^{\paren{t}}_{\textrm{u},i}\bracket{k}$, and $\varepsilon^{\paren{t}}_{\textrm{d},j}\bracket{k}$ denote the step sizes of the $\ith{t}$ iteration for $\lambda^{\paren{t}}_{\textrm{u},i}\bracket{k}$, $\lambda^{\paren{t}}_{\textrm{d}}\bracket{k}$, $\mu^{\paren{t}}_{\textrm{u},i}\bracket{k}$, and $\mu^{\paren{t}}_{\textrm{d},j}\bracket{k}$, respectively, 
$P^{\paren{t}}_{\textrm{u},i}\bracket{k}$, $P^{\paren{t}}_{\textrm{B}}\bracket{k}$,
$\mathbf{P}^{\paren{t}}_{\textrm{u},i}\bracket{k}$, $\mathbf{P}^{\paren{t}}_{\textrm{d},j}\bracket{k}$, $\mathit{R}^{\paren{t}}_{\textrm{u},i}\bracket{k}$, $\mathit{R}^{\paren{t}}_{\textrm{d},j}\bracket{k}$, and $\Xi^{\paren{t}}_{\textrm{wmmse}}$ the $\ith{t}$ iterates of $P_{\textrm{u},i}\bracket{k}$, $P_{\textrm{B}}\bracket{k}$,  $\PiB$, $\PBj$, $\mathit{R}_{\textrm{u},i}\bracket{k}$, $\mathit{R}_{\textrm{d},j}\bracket{k}$, and $\Xi_{\textrm{wmmse}}$, respectively \cite{Lui2006subg}. Note that $\mathbf{P}^{\paren{t}}_{\textrm{u},i}\bracket{k}$ and $\mathbf{P}^{\paren{t}}_{\textrm{d},j}\bracket{k}$ are obtained by replacing $\lambda_{\textrm{u},i}\bracket{k}$ and $\mu_{\textrm{u},i}\bracket{k}$ with $\lambda^{\paren{t}}_{\textrm{u},i}\bracket{k}$ and $\mu^{\paren{t}}_{\textrm{u},i}\bracket{k}$ in $\paren{\mathrm{\ref{PiB}}}$, and $\lambda_{\textrm{d}}\bracket{k}$ and $\mu_{\textrm{d},j}\bracket{k}$ with $\lambda^{\paren{t}}_{\textrm{d}}\bracket{k}$ and $\mu^{\paren{t}}_{\textrm{d},j}\bracket{k}$ in $\paren{\mathrm{\ref{PBj}}}$, respectively.

\color{black}There are various options to choose $\beta^{\paren{t}}_{\textrm{u},i}\bracket{k}$, $\beta^{\paren{t}}_{\textrm{d}}\bracket{k}$, $\varepsilon^{\paren{t}}_{\textrm{u},i}\bracket{k}$, and $\varepsilon^{\paren{t}}_{\textrm{d},j}\bracket{k}$ to ensure that the subgradient updates in \eqref{lambda_UL}-\eqref{mu_DL} converge to optimal values $\boldsymbol{\lambda}^\star$ and $\boldsymbol{\mu}^\star$ of $\boldsymbol{\lambda}$ and $\boldsymbol{\mu}$, respectively. Broadly, two rules are used. The first determines the step size before executing the algorithm. 
This includes fixed and progressively diminishing step sizes; the latter should be square summable (but not necessarily summable). The second rule computes the steps using, for example, the Polyak's procedure \cite{Boydsubgradientnote}. As discussed in \cite{Boydsubgradientnote} and \cite{Lui2006subg}, the convergence rate of subgradient is dependent on step sizes and initialization points. It was mentioned in \cite{Boydsubgradientnote} that the Polyak’s rule achieves faster convergence rate than other rules because it utilizes the optimal function value of the current iteration while computing the step size. Applying the Polyak's rule to $\beta^{\paren{t}}_{\textrm{u},i}\bracket{k}$, $\beta^{\paren{t}}_{\textrm{d}}\bracket{k}$,  $\varepsilon^{\paren{t}}_{\textrm{u},i}\bracket{k}$, and $\varepsilon^{\paren{t}}_{\textrm{d},j}\bracket{k}$ produces
\begin{subequations}
\begin{align}
\label{eq: beta_ui_k_t}
\beta^{\paren{t}}_{\textrm{u},i}\bracket{k}=\sfrac{\paren{\Xi^{\paren{t}}_{\textrm{wmmse}}-\Xi^{\textrm{min}}_{\textrm{wmmse}}+0.1^t}}{\left|P^{\paren{t}}_{\textrm{u},i}\bracket{k}-P_\textrm{U}\right|^2},
\end{align}
\begin{align}
\label{eq: beta_d_k_t}
\beta^{\paren{t}}_{\textrm{d}}\bracket{k}=\sfrac{\paren{\Xi^{\paren{t}}_{\textrm{wmmse}}-\Xi^{\textrm{min}}_{\textrm{wmmse}}+0.1^t}}{\left|P^{\paren{t}}_{\textrm{B}}\bracket{k}-P_\textrm{B}\right|^2},
\end{align}
\begin{flalign}
\label{eq: epsilon_ui_k_t}
\varepsilon^{\paren{t}}_{\textrm{u},i}\bracket{k} = \sfrac{\paren{\Xi^{\paren{t}}_{\textrm{wmmse}}-\Xi^{\textrm{min}}_{\textrm{wmmse}}+0.1^t}}{\left|R_{\textrm{UL}}-R^{\paren{t}}_{\textrm{u},i}\bracket{k}\right|^2},
\end{flalign}
\begin{flalign}
\label{eq: epsilon_dj_k_t}
\textrm{and }\varepsilon^{\paren{t}}_{\textrm{d},j}\bracket{k} = \sfrac{\paren{\Xi^{\paren{t}}_{\textrm{wmmse}}-\Xi^{\textrm{min}}_{\textrm{wmmse}}+0.1^t}}{\left|R_{\textrm{DL}}-R^{\paren{t}}_{\textrm{u},i}\bracket{k}\right|^2}.
\end{flalign}
\end{subequations}\normalsize 
Denote $\ell_{\textrm{max}}$, $\iota_{\textrm{max}}$, $t_{\textrm{u,max}}$($t_{\textrm{d,max}}$) denote the maximum iterations for the BCD-AP MRMC, WMMSE-MRMC, and the subgradient algorithms, respectively; $(\cdot)^{\paren{\ell,\iota,t}}$ as the iterate of a variable at the $\ith{\ell}$, $\ith{\iota}$, and $\ith{t}$  iterations of BCD-AP MRMC, WWMSE-MRMC, and subgradient algorithms; and $\braces{\mathbf{P}^{\paren{\ell,\iota,0}}}\triangleq\braces{\mathbf{P}^{\paren{\ell,\iota,0}}_{\textrm{u},i}\bracket{k},\mathbf{P}^{\paren{\ell,\iota,0}}_{\textrm{d},j}\bracket{k},\forall \braces{i,j,k}}$. The sub-gradient method is not descent-based and $\Xi^{\paren{t}}_{\textrm{wmmse}}$ may increase at certain iterations\cite{Lui2006subg}. Therefore, $\lambda^{\star}_{\textrm{u},i}\bracket{k}$, $\mu^\star_{\textrm{u},i}\bracket{k}$, $\lambda^{\star}_{\textrm{d}}\bracket{k}$, and $\mu^\star_{\textrm{d},j}\bracket{k}$ are employed to keep track of those values of $\lambda^{\paren{t}}_{\textrm{u},i}\bracket{k}$, $\mu^{\paren{t}}_{\textrm{u},i}\bracket{k}$, $\lambda^{\paren{t}}_{\textrm{d}}\bracket{k}$, and $\mu^{\paren{t}}_{\textrm{d},j}\bracket{k}$ that yield the minimum $\Xi^{\paren{t}}_{\textrm{wmmse}}$ in the current iteration, i.e.  $\Xi^{\textrm{min}}_{\textrm{wmmse}}$. 
Algorithm $\ref{ULalgorithm}$ summarizes the steps of the sub-gradient-based procedures to find $\lambda_{\textrm{u},i}\bracket{k}$, $\mu_{\textrm{u},i}\bracket{k}$ and corresponding $\mathbf{P}^{\paren{\mathrm{\ell,\iota}}}_{\textrm{u},i}\bracket{k}$. We set $\widetilde{\mathbf{P}}_{\textrm{u},i}\bracket{k}$ as the optimal estimates of the previous iteration (step \ref{step: tilde_P_ui_k}). We use the same procedure for DL UE to find  $\lambda_{\textrm{d}}\bracket{k}$, $\mu_{\textrm{d},j}\bracket{k}$ and $\mathbf{P}^{\paren{\mathrm{\ell,\iota}}}_{\textrm{d},j}\bracket{k}$. \color{black} 
\begin{algorithm}[ht!]
\caption{Subgradient approach to solve $\paren{\ref{dualproblem}}$ for UL UE}
\label{ULalgorithm}
\begin{algorithmic}[1]
\Statex \textbf{Input: } $\braces{\mathbf{P}^{\paren{\mathrm{\ell,\iota}}}}$,  $\mathbf{A}^{\paren{\mathrm{\ell,\iota}}}$, $\braces{\mathbf{U}^{\paren{\mathrm{\ell}}}}$, $t_{\textrm{u,max}}$
\Statex \textbf{Output:} $\mathbf{P}^{\paren{\mathrm{\ell,\iota}}}_{\textrm{u},i}\bracket{k}$, $\mu^\star_{\textrm{u},i}\bracket{k}$
\State Initialize $\lambda^{\paren{\mathrm{0}}}_{\textrm{u},i}\bracket{k}=1$, $\mu^{\paren{\mathrm{0}}}_{\textrm{u},i}\bracket{k}=1$, $\braces{\mathbf{P}^{\paren{\ell,\iota,0}}}=\braces{\mathbf{P}^{\paren{\mathrm{\ell,\iota}}}}$  
\State $t\leftarrow1$, $\Xi^{\paren{\textrm{min}}}_{\textrm{wmmse}}\xleftarrow{\braces{\mathbf{P}^{\paren{\mathrm{\ell,\iota}}}},\mathbf{A}^{\paren{\mathrm{\ell,\iota}}},\braces{\mathbf{U}^{\paren{\mathrm{\ell}}}}}$ \eqref{Xi_Mses}  
\Repeat
\State update $P^{\paren{t}}_{\textrm{u},i}\bracket{k}$, $\mathit{R}^{\paren{\textrm{t}}}_{\textrm{u},i}\bracket{k}$, $\beta^{\paren{t}}_{\textrm{u},i}\bracket{k}$ $\varepsilon^{\paren{t}}_{\textrm{u},i}\bracket{k}$, 
$\lambda^{\paren{t}}_{\textrm{u},i}\bracket{k}$ , $\mu^{\paren{t}}_{\textrm{u},i}\bracket{k}$ 
\State $\mathbf{P}^{\paren{\mathrm{\ell,\iota,t}}}_{\textrm{u},i}\bracket{k}\xleftarrow{\widetilde{\mathbf{P}}_{\textrm{u},i}\bracket{k}=\mathbf{P}^{\paren{\mathrm{\ell,\iota}}}_{\textrm{u},i}\bracket{k}}$  \eqref{PiB} \label{step: tilde_P_ui_k}
\State $\Xi^{\paren{t}}_{\textrm{wmmse}}\xleftarrow{\braces{\mathbf{P}^{\paren{\ell,\iota,t}}},\mathbf{A}^{\paren{\mathrm{\ell,\iota}}},\braces{\mathbf{U}^{\paren{\mathrm{\ell}}}}}$  \eqref{Xi_Mses} 
\If{$\Xi^{\textrm{min}}_{\textrm{wmmse}}>\Xi^{\paren{t}}_{\textrm{wmmse}}$}  $\mathbf{P}^{\paren{\mathrm{\ell,\iota}}}_{\textrm{u},i}\bracket{k}=\mathbf{P}^{\paren{\mathrm{\ell,\iota,t}}}_{\textrm{u},i}\bracket{k}$, $\lambda^\star_{\textrm{u},i}\bracket{k}=\lambda^{\paren{t}}_{\textrm{u},i}\bracket{k}$,
$\mu^\star_{\textrm{u},i}\bracket{k}=\mu^{\paren{t}}_{\textrm{u},i}\bracket{k}$, and $\Xi^{\textrm{min}}_{\textrm{wmmse}}=\Xi^{\paren{t}}_{\textrm{wmmse}}$
\EndIf
\State $t\leftarrow t+1$
\Until $t>t_{\textrm{u,max}}$		
\State\Return{$\mathbf{P}^{\paren{\mathrm{\ell,\iota}}}_{\textrm{u},i}\bracket{k}$}, $\mu^\star_{\textrm{u},i}\bracket{k}$
\end{algorithmic}
\end{algorithm}\normalsize

Upon executing the subgradient algorithms for all $i$, $j$, and $k$, $\mathbf{A}^{\paren{\ell,\iota}}$ is solved by replacing  $\boldsymbol{\mu}^\star$ and $\braces{\mathbf{P}^{\paren{\ell,\iota}}}$  
with $\boldsymbol{\mu}$ and $\braces{\mathbf{P}}$ in \eqref{ak} during the $\ith{\iota}$ iteration of the WMMSE-MRMC algorithm (Algorithm~\ref{convexalgorithm}), whose maximum number of iterations is $\mathrm{\iota}_{\textrm{max}}$. The outputs of the Algorithm~\ref{convexalgorithm} constitute the $\ith{\ell}$ iterate of $\braces{\mathbf{P}^{\paren{\ell}}}$ and $\mathbf{A}^\prime$ for the Algorithm \ref{Alternating_sum}.  

So far, Algorithm~\ref{convexalgorithm} utilized perfect chanel state information (CSI). In practice, the CSI is estimated. The resulting estimation error may be modeled either as norm-bounded or stochastically \cite{FD_WMMSE}. The former is employed when quantization error is the primary source of the CSI error. However, quantization analysis is beyond the scope of this paper. Therefore, we adopt the latter by modeling the channel matrix as $\widehat{\mathbf{H}}={\mathbf{H}}+\Delta$, where and $\Delta\sim\mathcal{CN}\paren{\mathbf{0},\eta^2_{\mathrm{CSI}}\mathbf{I}}$ is the error with the variance $\eta^2_{\mathrm{CSI}}\mathbf{I}$.
\begin{algorithm}[ht!]
\caption{WMMSE-MRMC algorithm to solve $\paren{\ref{WMMSE2}}$}
\label{convexalgorithm}
\begin{algorithmic}[1]
	\Statex \textbf{Input:} $\braces{\mathbf{P}^{\paren{\mathrm{\ell}}}}$, $\mathbf{A}^{\paren{\mathrm{\ell}}}$, $\braces{\mathbf{U}^{\paren{\mathrm{\ell}}}}$, $\mathrm{\iota}_{\textrm{max}}$, $t_{\textrm{u,max}}$, and $t_{\textrm{d,max}}$
	\Statex \textbf{Output: } $\braces{\mathbf{P}^{\paren{\ell}}}$, $\mathbf{A}^{\prime}$
	\State Set $\braces{\mathbf{P}^{\paren{\mathrm{\ell,0}}}}\triangleq\braces{\mathbf{P}^{\paren{\mathrm{\ell,0}}}_{\textrm{u},i}\bracket{k},\mathbf{P}^{\paren{\mathrm{\ell,0}}}_{\textrm{d},j}\bracket{k}, \forall \braces{i,j,k}}=\braces{\mathbf{P}^{\paren{\mathrm{\ell}}}}
	$ and $\mathbf{A}^{\paren{\mathrm{\ell,0}}}\triangleq\bracket{\paren{\mathbf{a}^{\paren{\mathrm{\ell,0}}}\bracket{0}}^\top;\cdots;\paren{\mathbf{a}^{\paren{\mathrm{\ell,0}}}\bracket{\mathrm{\mathit{K}}}}^\top}=\mathbf{A}^{\paren{\mathrm{\ell}}}$
	\State Set the iteration index $\mathrm{\iota}=0$ 
	\Repeat
	\For{$k=1,\cdots,\mathrm{\mathit{K}}$}\label{stepk} 
	\For{$i=1,\cdots,\mathit{I}$, $j=1,\cdots,\mathit{J}$}
	\State $\mathbf{P}^{\paren{\mathrm{\ell,\iota+1}}}_{\textrm{u},i}\bracket{k},\mu^\star_{\textrm{u},i}\bracket{k}\xleftarrow[t_{\textrm{u,max}},\braces{\mathbf{U}^{\paren{\mathrm{\ell}}}}]{\textrm{Subgradient}}\braces{\mathbf{P}^{\paren{\mathrm{\ell,\iota}}}},\mathbf{A}^{\paren{\mathrm{\ell,\iota}}}$ 
	\State 
	$\mathbf{P}^{\paren{\mathrm{\ell,\iota+1}}}_{\textrm{d},j}\bracket{k},\mu^\star_{\textrm{d},j}\bracket{k}\xleftarrow[t_{\textrm{d,max}},\braces{\mathbf{U}^{\paren{\mathrm{\ell}}}}]{\textrm{Subgradient }}\braces{\mathbf{P}^{\paren{\mathrm{\ell,\iota}}}},\mathbf{A}^{\paren{\mathrm{\ell,\iota}}}$ 
	\EndFor
	\State 	$\mathbf{a}^{\paren{\mathrm{\ell,\iota+1}}}\bracket{k}\xleftarrow{\eqref{ak}}\braces{\mathbf{P}^{\paren{\mathrm{\ell,\iota}}}},\mathbf{A}^{\paren{\mathrm{\ell,\iota}}}$, $\braces{\mathbf{U}^{\paren{\mathrm{\ell}}}}$, $\boldsymbol{\mu}^\star$
	\EndFor
	\State $\mathbf{A}^{\paren{\mathrm{\ell,\iota+1}}}=\bracket{\paren{\mathbf{a}^{\paren{\mathrm{\ell,\iota+1}}}\bracket{0}}^\top;\cdots;\paren{\mathbf{a}^{\paren{\mathrm{\ell,\iota+1}}}\bracket{\mathrm{\mathit{K}}}}^\top}$
	\State	$\mathrm{\iota}\leftarrow \mathrm{\iota}+1$
	\Until $\mathrm{\iota}>\mathrm{\iota}_{\textrm{max}}$
	\State $\braces{\mathbf{P}^{\paren{\ell}}}\leftarrow\braces{\mathbf{P}^{\paren{\mathrm{\ell,\iota}}}}$,  $\mathbf{A}^{\prime}\leftarrow\mathbf{A}^{\paren{\mathrm{\ell,\iota}}}$
	\State \Return $\braces{\mathbf{P}^{\paren{\ell}}}$, $\mathbf{A}^{\prime}$
\end{algorithmic}\normalsize
\end{algorithm}

\subsection{Nearest vector method to find \texorpdfstring{$\mathbf{A}^\star$}{string}}
\label{subsec: PAR}
Upon obtaining $\mathbf{A}^\prime=\bracket{\mathbf{a}^\prime_{1},\cdots,\mathbf{a}^\prime_{M\rr}}$, which is the optimal solution for $\mathbf{A}\in\mathbb{A}_{\textrm{c}}$, the next step in the BCD-AP algorithm is to apply AP for projecting $\mathbf{a}^\prime_{m\rr}$ onto $\mathbb{A}_{\textrm{r}}$. 
The nearest element of $\mathbf{a}^\prime_{m\rr}$ in $\mathbb{A}_{\textrm{r}}$ for all $m\rr$ in the following problem
\begin{equation}
\label{radarmiop}
\underset{\mathbf{a}_{m\rr},\forall m\rr}{\text{minimize}}\;\lVert\mathbf{a}_{m\rr}-\mathbf{a}^\prime_{m\rr}\rVert^2_2\text{ subject to}\;  \paren{\mathrm{\ref{constraint:radarpower}}}\textrm{ and }\paren{\mathrm{\ref{constraint:radarpar}}},
\end{equation}\normalsize
yields $\mathbf{a}^\star_{m\rr}$, the $\ith{m\rr}$ column of $\mathbf{A}^\star$. This is effectively a matrix nearness problem with specified column norms and PARs. It arises in structured tight frame design problems and is solved via AP \cite{nearestvector,arXiv180203889Z}. Using ``nearest vector with low PAR" algorithm \cite{nearestvector}, we find $\mathbf{a}^{\paren{\ell}}_{m\rr}$ for all $m\rr$ recursively at the $\ith{\ell}$ iteration of the BCD-AP MRMC algorithm (see Algorithm~\ref{PARalgorithm}).
\begin{algorithm}[H]
\caption{Nearest vector method to find $\mathbf{A}^{\paren{\ell}}$}
\label{Nearness}
\label{PARalgorithm}
	\begin{algorithmic}[1]
		\Statex \textbf{Input:} $\mathbf{A}^\prime=\bracket{\mathbf{a}^{\prime}_1,\cdots,\mathbf{a}^{\prime}_{\mathit{M}\rr}}$, $P_{\textrm{r}, m\rr}$, $\gamma_{m\rr}$, $\forall m\rr$
		\Statex \textbf{Output:} $\mathbf{A}^{\paren{\ell}}=\bracket{\mathbf{a}^{\paren{\mathrm{\ell}}}_1,\cdots,\mathbf{a}^{\paren{\ell}}_{\mathit{M}\rr}}$
		\For{$m\rr=1,\cdots,\mathit{M}\rr$}
		\State Normalize $\mathbf{a}^{\prime}_{m\rr}$ to unit norm; define $\sigma_{m\rr}=\sqrt{\sfrac{P_{\textrm{r},m\rr}\gamma_{m\rr}}{\mathit{K}}}$
		\State $P\leftarrow$ number of elements in $\mathbf{a}^{\paren{\mathrm{\ell}}}_{m\rr}$ with the least magnitude
		\State $\varpi\leftarrow$ indices of the elements in  $\mathbf{a}^{\paren{\mathrm{\ell}}}_{m\rr}$ with the least magnitude
		\If{$\min\paren{\lvert\mathbf{a}^{\paren{\mathrm{\ell}}}_{m\rr}\bracket{k}\rvert}=0, \forall k\in\varpi$}{
                \begin{equation*}
			\mathbf{a}^{\paren{\mathrm{\ell}}}_{m\rr}\bracket{k} =
			\begin{cases}
			\sqrt{\frac{P_{\textrm{r}, m\rr}\paren{K-P}\sigma^2_{m\rr}}{P}}& \text{if } k\in\varpi,
			\\
			\sigma_{m\rr}e^{j\angle{\mathbf{a}^{\paren{\mathrm{\ell}}}_{m\rr}\bracket{k}}}&\text{if } k\notin\varpi
			\end{cases}
			\end{equation*}}
		\Else
		{ 
			$\rho=\sqrt{\frac{P_{\textrm{r}, m\rr}\paren{K-P}\sigma^2_{m\rr}}{\sum_{k\in\varpi}\lvert\mathbf{a}_{m\rr}\bracket{k}\rvert^2}}$
			and 
                \begin{equation*}
		\mathbf{a}^{\paren{\mathrm{\ell}}}_{m\rr}\bracket{k}=
		\begin{cases}
		\rho\mathbf{a}^{\paren{\mathrm{\ell}}}_{m\rr}\bracket{k}& \text{if } k\in\varpi,
		\\
		\sigma_{m\rr}e^{j\angle{\mathbf{a}^{\paren{\mathrm{\ell}}}_{m\rr}\bracket{k}}}&\text{if } k\notin\varpi 
		\end{cases}		
		\end{equation*}
		}
		\EndIf
		\EndFor	
		\State\Return{$\mathbf{A}^{\paren{\ell}}=\bracket{\mathbf{a}^{\paren{\mathrm{\ell}}}_{1},\cdots,\mathbf{a}^{\paren{\mathrm{\ell}}}_{\mathit{M}\rr}}$}
	\end{algorithmic}
\end{algorithm}
Once $\braces{\mathbf{P}^{\paren{\ell}}}$ and $\mathbf{A}^{\paren{\ell}}$ are known, we update $\braces{\mathbf{U}^{\paren{\ell}}}$  with WMMSE solutions from Section~\ref{subsec: MMSE section}. Algorithm $\ref{Alternating_sum}$ summarizes the BCD-AP MRMC with $\mathrm{\ell}_{\textrm{max}}$ the maximum number of iterations. 
	\begin{algorithm}[H]
	    \caption{BCD-AP MRMC algorithm 
	    }
		\label{Alternating_sum}
		\begin{algorithmic}[1]
			\Statex \textbf{Input:} $\mathbf{A}$, $\mathrm{\ell}_{\textrm{max}}$, $\mathrm{\iota}_{\textrm{max}}$, $t_{\textrm{u,max}}$ $t_{\textrm{d,max}}$
			\Statex \textbf{Output:} Optimal UL/DL precoders $\braces{\mathbf{P}^\star}$, MIMO radar code matrix $\mathbf{A}^\star$, and LRFs $\braces{\mathbf{U}^\star}$
			\State Initialize $\braces{\mathbf{P}^{\paren{\mathrm{0}}}}\triangleq
			\braces{\mathbf{P}^{\paren{\mathrm{0}}}_{\textrm{u},i}\bracket{k},\mathbf{P}^{\paren{\mathrm{0}}}_{\textrm{d},j}\bracket{k}, \forall \braces{i,j,k}}$ and $\mathbf{A}^{\paren{\mathrm{0}}}=\bracket{\mathbf{a}^{\paren{\mathrm{0}}}\bracket{0};\cdots;\mathbf{a}^{\paren{\mathrm{0}}}\bracket{\mathrm{\mathit{K}}}}$
			\State 
			$\braces{\mathbf{U}^{\paren{\mathrm{0}}}}$ $\xleftarrow{\mathrm{\paren{\ref{radarWMMSE_Rx}}},\mathrm{\paren{\ref{UL_WMMSE_Rx}}},\mathrm{\paren{\ref{DL_WMMSE_Rx}}}}$  $\braces{\mathbf{P}^{\paren{\mathrm{0}}}}$ and $\mathbf{A}^{\paren{\mathrm{0}}}$ 
			\State Set the alternating projection iteration index $\mathrm{\ell=0}$
			\Repeat \; 
			\State $\braces{\mathbf{P}^{\paren{\mathrm{\ell+1}}}}$, $\mathbf{A}^\prime$ $\xleftarrow[\mathrm{\iota}_{\textrm{max}},t_{\textrm{u,max}},t_{\textrm{d,max}}]{\textrm{Algorithm }\ref{convexalgorithm}}$ $\braces{\mathbf{P}^{\paren{\mathrm{\ell}}}}$, $\mathbf{A}^{\paren{\mathrm{\ell}}}$, $\braces{\mathbf{U}^{\paren{\mathrm{\ell}}}}$  
			\State $\mathbf{A}^{\paren{\mathrm{\ell+1}}}\xleftarrow{\textrm{Algorithm }\ref{Nearness}}\mathbf{A}^\prime$  
			\State $\braces{\mathbf{U}^{\paren{\mathrm{\ell+1}}}}$ $\xleftarrow{\mathrm{\paren{\ref{radarWMMSE_Rx}}},\mathrm{\paren{\ref{UL_WMMSE_Rx}}},\mathrm{\paren{\ref{DL_WMMSE_Rx}}}}$  $\braces{\mathbf{P}^{\paren{\mathrm{\ell+1}}}}$ and $\mathbf{A}^{\paren{\mathrm{\ell+1}}}$ 
			\State $\mathrm{\ell}\leftarrow\mathrm{\ell}+1$
			\Until $\mathrm{\ell}>\mathrm{\ell}_{\textrm{max}}$
			\State $\braces{\mathbf{P}^\star}\leftarrow\braces{\mathbf{P}^{\paren{\mathrm{\ell}}}}$, $\mathbf{A}^\star\leftarrow\mathbf{A}^{\mathrm{\ell}}$, and $\braces{\mathbf{U}^\star}\leftarrow\braces{\mathbf{U}^{\paren{\mathrm{\ell}}}}$
			\State \Return{$\braces{\mathbf{P}^\star},\mathbf{A}^\star,\braces{\mathbf{U}^\star}$}
		\end{algorithmic}
	\end{algorithm}\normalsize

\subsection{Complexity and Convergence}
\label{sec: Complexity}
The subgradient algorithms are guaranteed to converge to $\boldsymbol{\lambda}^\star$ and $\boldsymbol{\mu}^\star$ as long as $\beta^{\paren{t}}_{\textrm{u},i}\bracket{k}$, $\beta^{\paren{t}}_{\textrm{d}}\bracket{k}$, $\varepsilon^{\paren{t}}_{\textrm{u},i}\bracket{k}$, and $\varepsilon^{\paren{t}}_{\textrm{d},j}\bracket{k}$ are sufficiently small; their computational complexities are $\mathcal{O}\paren{\mathit{I}}$ and $\mathcal{O}\paren{\mathit{J}}$, respectively \cite{Lui2006subg}. The WMMSE-MRMC algorithm converges locally because its alternating procedure produces a monotonically non-increasing sequence of iterates, $\braces{\Xi^{\paren{\ell,\iota}}_{\textrm{wmmse}}}$; see Appendix C of \cite{Luo2011IterativeWMMSE} for proof. However, $\Xi_{\textrm{wmmse}}$ is not jointly convex on $\braces{\mathbf{P}^{\paren{\ell}}}$ and $\mathbf{A}^{\paren{\ell}}$. Hence, the global convergence is not guaranteed \cite{Luo2011IterativeWMMSE,FD_WMMSE}. The computational complexities of WMMSE-MRMC to update $\PiB$, $\PBj$, and $\mathbf{a}\bracket{k}$ at each iteration are, respectively, $\mathcal{O}\paren{\paren{N^\textrm{u}_iD^\textrm{u}_{i}}^3}$, $\mathcal{O}\paren{\paren{M_{\textrm{c}}D^\textrm{d}_{j}}^3}$, and $\mathcal{O}\paren{M_{\textrm{r}}^3}$, primarily because of complexity in solving Sylvester equations. The total per-frame complexity of each iteration for Algorithm \ref{convexalgorithm} to solve \eqref{WMMSE2} is $\mathcal{O}\paren{I^2\paren{N^\textrm{u}_iD^\textrm{u}_{i}}^3}+\mathcal{O}\paren{J^2\paren{M_{\textrm{c}}D^\textrm{d}_{j}}^3}+\mathcal{O}\paren{N^2\rr M^3_{\textrm{r}}}$. GB search on the other hand, is more computationally complex with $\mathcal{O}\paren{I^3\paren{N^\textrm{u}_iD^\textrm{u}_{i}}^3}$ per iteration multiplications for DL \cite{MSE_FD}. The objective function in $\paren{\ref{radarmiop}}$ satisfies the Kurdyka-\L ojasiewicz property. Therefore, the sequence $\braces{\mathbf{a}^{\paren{\ell}}_{m\rr}}$ generated by Algorithm~\ref{Nearness} at the $\ith{\ell}$ step is convergent for all $m\rr$ \cite{arXiv180203889Z}. The BCD-AP MRMC algorithm also converges to the local optimum with a convergence rate of $\mathcal{O}\paren{\sfrac{1}{\mathrm{\ell}_{\textrm{max}}}}$\cite{BCDconvergence}. 
\color{black} In general, initialization methods affect the convergence rate and local optimal values of BCD-AP MRMC. As suggested in \cite{FD_WMMSE,Luo2011IterativeWMMSE}, to reasonably \textit{approach} the global optimum, one can perform random precoder initializations and average over a large number of channel realizations while keeping track of the best result.
\color{black}
\section{Numerical Experiments}
\label{sec:numerical}
We validated our spectral co-design approach through extensive numerical experiments. Throughout this section, we assume the noise variances $\sigma^2_{\textrm{r}}=\sigma^2_{\B}=\sigma^2_{\textrm{d}}=0.001$. We assume unit small scale fading channel gains, namely, the elements of $\mathbf{H}_{\textrm{B},j}$, $\mathbf{H}_{i,\textrm{B}}$, $\mathbf{H}_{i,j}$, $\boldsymbol{\alpha}_{\textrm{Bm},n\rr}$, and $\boldsymbol{\alpha}_{i,n\rr}$ are drawn from $\mathcal{CN}\paren{0,1}$. We model the self-interfering channel $\HBB$ as $\mathcal{CN}\paren{\sqrt{\frac{\sigma^2_{\mathrm{SI}}K_{\B}}{1+K_{\B}}}\widehat{\mathbf{H}}_{\textrm{BB}},\frac{\sigma^2_{\mathrm{SI}}}{1+K_{\B}}\mathbf{I}_{N\cc}\otimes\mathbf{I}_{M\cc}}$, where $\sigma^2_{\mathrm{SI}}$ is the SI attenuation coefficient that characterizes the effectiveness of SI cancellation \cite{MSE_FD}, the Rician factor $K_{\B}=1$, and $\widehat{\mathbf{H}}_{\textrm{BB}}\in\mathbb{C}^{N\cc\times M\cc}$ is an all-one matrix \cite{FD_WMMSE}. Define the signal-to-noise ratios (SNRs) associated with the MIMO radar, DL, and UL as $\mathrm{SNR}_{\textrm{r}}=\sfrac{P_{\textrm{r},m\rr}}{\sigma^2_{\textrm{r}}}$, $\mathrm{SNR}_{\textrm{DL}}=\sfrac{P_{\textrm{B}}}{\sigma^2_{\textrm{d}}}$, and $\mathrm{SNR}_{\textrm{UL}}=\sfrac{P_{\textrm{u},i}}{\sigma^2_{\textrm{\B}}}$ \cite{Luo2011IterativeWMMSE}. The clutter power $\sigma^2_{\textrm{c}}=\sigma^2_{m\rr\textrm{c}n\rr}$ for all $m\rr$ and $n\rr$ and clutter-to-noise ratio (CNR) is $\mathrm{CNR}=\sfrac{\sigma^2_{\textrm{c}}}{\sigma^2_0}$. Then, together with the direct path components, they are received at the IBFD communications Rxs. We model $\mathbf{h}_{m\rr,\textrm{B}}$ and $\mathbf{h}_{m\rr, j}$ as $\mathcal{CN}\paren{\sqrt{\frac{1}{\kappa+1}}\boldsymbol{\mu}_{m\rr,\textrm{B}},\frac{\eta^2_{\mathrm{m\rr},\textrm{B}}}{\kappa+1}\mathbf{I}_{N\cc}}$, and $\mathcal{CN}\paren{\sqrt{\frac{1}{\kappa+1}}\boldsymbol{\mu}_{m\rr,j},\frac{\eta^2_{\mathrm{m\rr},j}}{\kappa+1}\mathbf{I}_{N^{\textrm{d}}_{j}}}$, where $\kappa=1$, $\boldsymbol{\mu}_{m\rr,\B}=0.1\mathbf{1}_{N\cc}$, $\boldsymbol{\mu}_{m\rr,j}=0.05\mathbf{1}_{N^{\textrm{d}}_{j}}$,  $\eta^2_{m\rr,\B}=0.3$, $\eta^2_{m\rr,j}=0.5$.

Unless otherwise stated, we use the following parameter values: number of radar Txs and Rxs: $\mathit{M}\rr=N\rr=4$; number of communications Tx and Rx antennas:  $\mathit{M}\cc=N\cc=4$; $\mathit{I}=\mathit{J}=2$; $N^{\textrm{u}}_{i}=\mathrm{d}^{\textrm{u}}_i=N^{\textrm{d}}_{j}=\mathrm{d}^{\textrm{d}}_j=2$,  for all $\braces{i,j}$; $\mathrm{SNR}_{\textrm{DL}}=\mathrm{SNR}_{\textrm{UL}}=10$ dB; $\sigma^2_{\mathrm{SI}}=0$ dB; \textcolor{black}{$\mathrm{CNR}=20$ dB}; radar PAR $\gamma_{m\rr}=3$ dB; number of communications frames or radar PRIs $\mathit{K}=8$; number of symbols in each frame or range cells in each radar PRI $N=32$; radar CUT index $n_\target=4$; UL (DL) indices of interest $n_{\textrm{rB}}=2$ ($n_{\textrm{r,d}}=3$); QoS of UL (DL): $\mathit{R}_{\textrm{u}}=\log_2(1+\frac{\mathrm{SNR}_{\textrm{UL}}}{M\rr*\mathrm{SNR}_{\textrm{r}}+\mathrm{SNR}_{\textrm{DL}}+(I-1)*\mathrm{SNR}_{\textrm{UL}}})$ bits/s/Hz ($\mathit{R}_{\textrm{d}}=\log_2(1+\frac{\sfrac{\mathrm{SNR}_{\textrm{DL}}}{J}}{M\rr*\mathrm{SNR}_{\textrm{r}}+\mathrm{SNR}_{\textrm{DL}}*\sfrac{(J-1)}{J}+I*\mathrm{SNR}_{\textrm{UL}}})$ bits/s/Hz); 
The normalized Doppler shifts $f_{m\rr\textrm{t}n\rr}T\rr$ and  $f_{\textrm{Bt}n\rr}T\rr$ are uniformly distributed in $\bracket{0.05,0.325}$ for each channel realization \cite{NaghshTSP2017}. The numbers of iterations for the subgradient, weighted minimum mean-squared-error (WMMSE)-MRMC, and BCD-AP MRMC algorithms are $t_{\textrm{u,max}}=t_{\textrm{d,max}}=200$, $\mathrm{\iota}_{\textrm{max}}=1$, $\mathrm{\ell}_{\textrm{max}}=2000$. We use uniform weights  $\alpha^\textrm{u}_{i}= \alpha^{\textrm{d}}_{j}=\alpha^\textrm{r}_{n\rr}=$ $\frac{1}{\paren{\mathit{I}+\mathit{J}+\mathit{N}\rr}}$ for all $\braces{n\rr,i,j}$.

\subsection{Convergence Analysis}
We demonstrate the convergence of the BCD-AP MRMC algorithm with different initialization settings for the FD communications precoders  $\braces{\mathbf{P}^{\paren{0}}}$ with $\mathrm{SNR}_{\textrm{r}}$ equal to $-5$ dB, $0$ dB, $5$ dB, and $10$ dB. \color{black}The first setting, dubbed as the deterministic initialization, initializes $\mathbf{P}^{\paren{0}}_{\textrm{d},j}\bracket{k}$ as the first $D^\textrm{d}_{j}$ columns of the right singular matrix of $\HBj$  and $\mathbf{P}^{\paren{0}}_{\textrm{u},i}\bracket{k}$ as the first $D^\textrm{d}_{j}$ columns of the right singular matrix of $\HiB$. Then, scale the non-zero singular values of $\mathbf{P}^{\paren{0}}_{\textrm{d},j}\bracket{k}$ and $\mathbf{P}^{\paren{0}}_{\textrm{u},i}\bracket{k}$ to be $\sfrac{P_\B}{JD^{\textrm{d}}_{j}}$ and $\sfrac{P_{\textrm{U}}}{D^{\textrm{u}}_{i}}$, respectively, for all $\braces{i,j,k}$ \cite{Luo2011IterativeWMMSE}.  The second method, or the random initialization, generates the singular vectors of $\mathbf{P}^{\paren{0}}_{\textrm{u},i}\bracket{k}$ and $\mathbf{P}^{\paren{0}}_{\textrm{d},j}\bracket{k}$ as two random matrices drawn from $\mathcal{CN}\paren{0,1}$ and normalizes singular values in the same way as the deterministic initialization. The first initialization method offers lower computational complexity than the latter.
\figurename{~\ref{fig:convergence}} shows both initialization approaches achieve convergence as the number of iterations increases for all the simulated radar SNR values. However, the proposed algorithm with random initialization consistently outperforms its counterpart with the deterministic initialization at the cost of computational complexity. The inset plot in \figurename{
~\ref{fig:convergence}} numerically demonstrates the efficiency of the proposed WMMSE-based algorithm; it converges within 80 iterations for both deterministic and random initializations.
\begin{figure}[t]
	\centering
	\includegraphics[width=1.0\columnwidth]{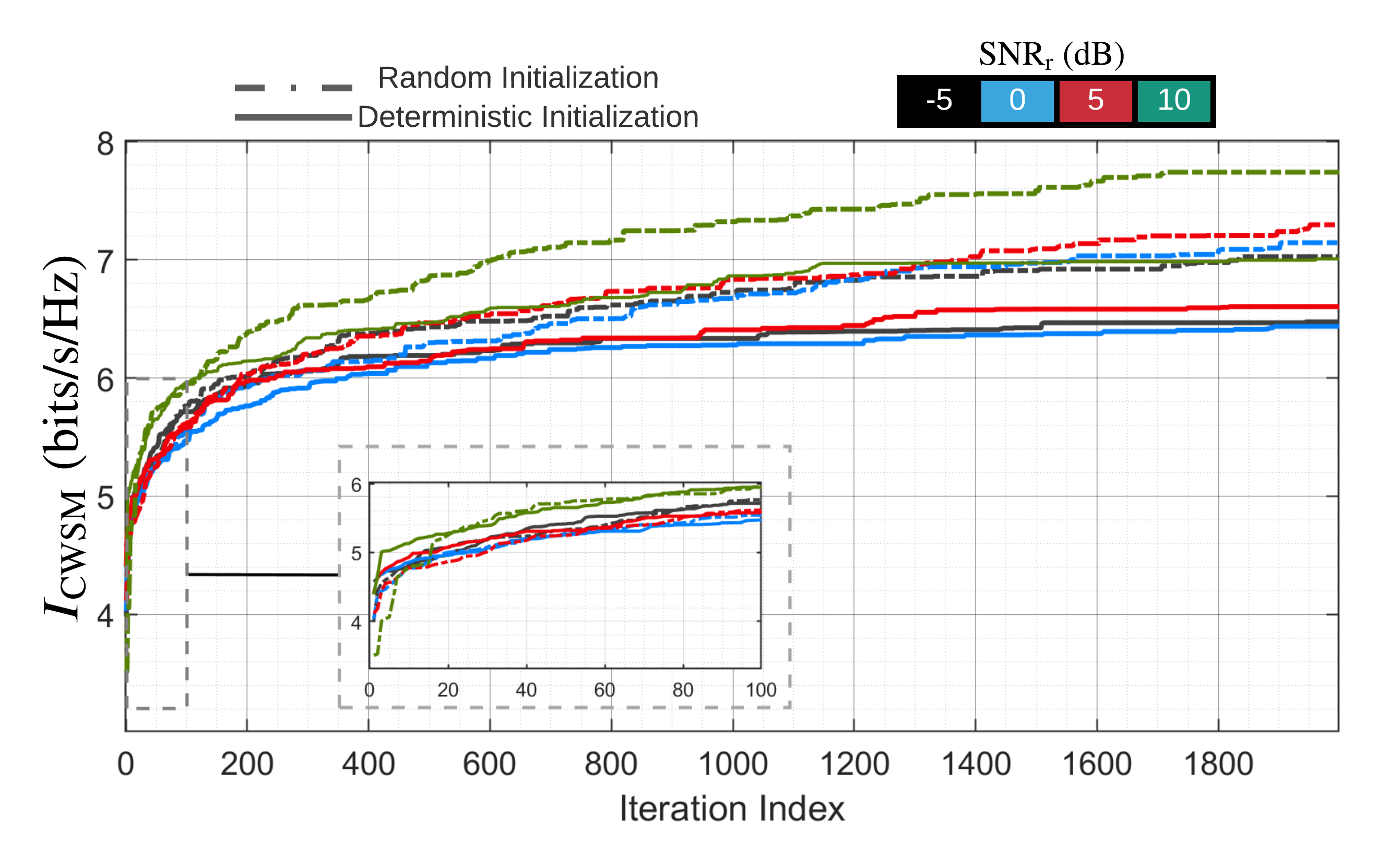}
	\caption{Convergence behaviors of the BCD-AP MRMC algorithm with two initialization methods and multiple $\mathrm{SNR}_\textrm{r}$s. 
	}
	\label{fig:convergence}
\end{figure}

\subsection{Joint Radar-Communications Performance}
\begin{figure}[t]
	\centering
	\includegraphics[width=1.0\columnwidth]{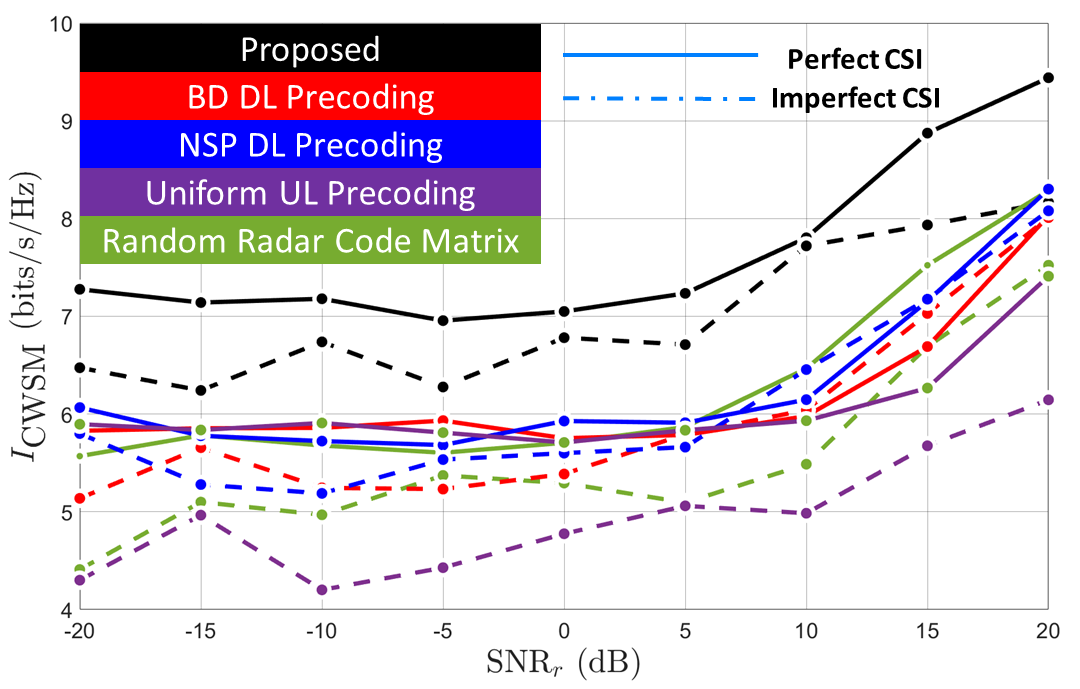}
	\caption{Proposed co-design compared with the conventional communications precoding and radar coding techniques under varying radar SNRs with and without CSI errors.}
	\label{fig:system_vs_SNR}
\end{figure}
\begin{figure*}[t]
    \centering
    \includegraphics[width=\textwidth]{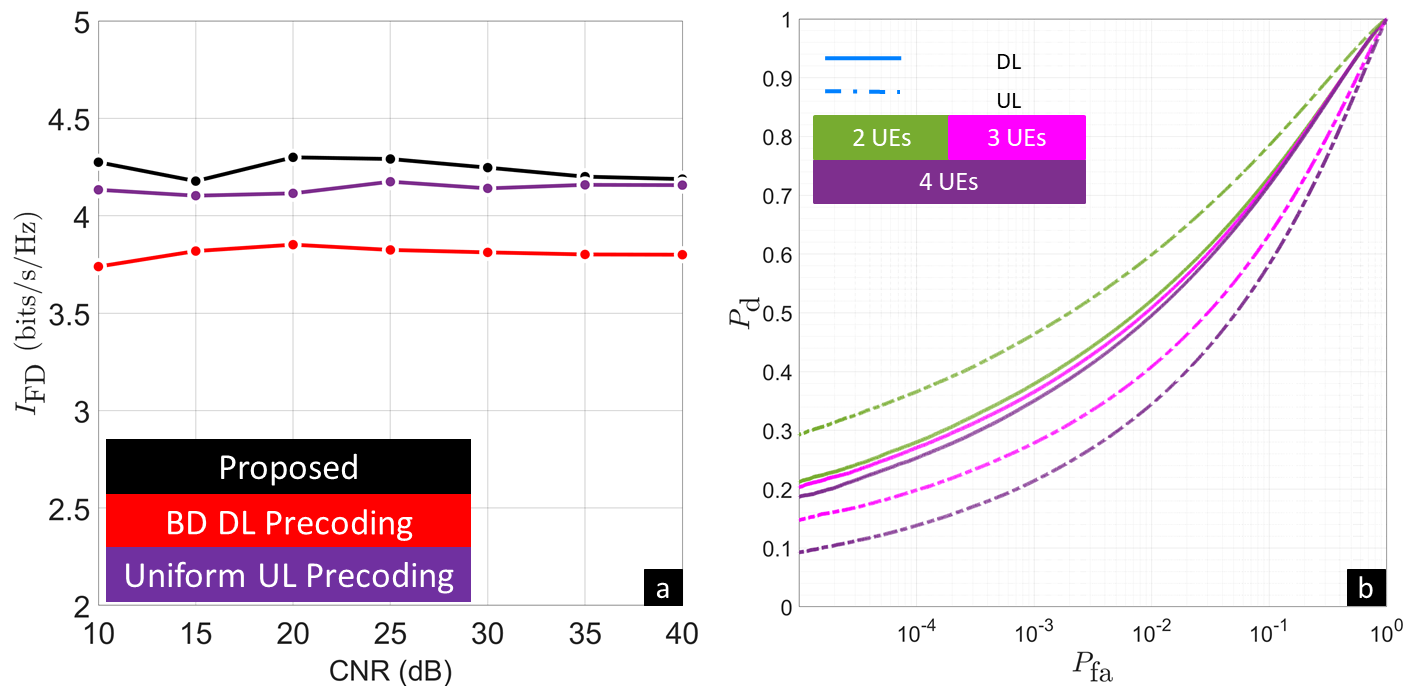}
    \caption{Joint radar and communications analyses: (a) IBFD MU-MIMO performance versus CNRs. (b) ROC curves with varying numbers of UL/DL UEs.}
    \label{fig:combined}
\end{figure*}
\begin{figure*}[t]
    \centering
    \includegraphics[width=\textwidth,trim={85cm 0 0 0},clip]{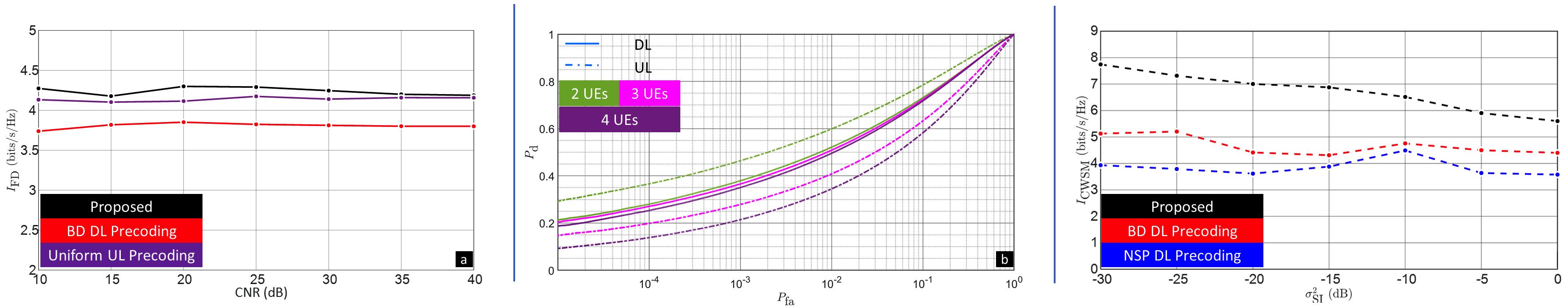}
    \caption{Impact of the FD SI on the proposed radar-communications co-design.}
    \label{fig:fd_si}
\end{figure*}
We evaluated the co-design performance by observing the mutual impact of the statistical MIMO radar and IBFD MU-MIMO communications on each other. \figurename{~\ref{fig:system_vs_SNR}} compares the comprehensive performance of the proposed BCD-AP MRMC algorithm with the existing communications precoding schemes and radar codes when the radar SNR changes\footnote{Note that the proposed UL precoding algorithm is applied to BD and NSP; the proposed DL precoding algorithm is applied to the uniform UL precoding case, and the proposed UL and DL precoding algorithms are applied to the cases of random and uncoded radar codes.} with perfect and imperfect CSI. We set $\eta^2_{\mathrm{CSI}}$ to $0.1$. The proposed co-design algorithm demonstrates a significant advantage over other precoding and radar code matrices for radar SNRs between -20 dB and 20 dB, with and without CIR errors. It is noteworthy that our approach with imperfect CSI yields a higher $I_{\mathrm{CWSM}}$ than even the perfect CSI scenarios of conventional techniques. This is explained by the fact that local CSI for each Tx, i.e., the channel coefficients that are directly connected to this Tx, is needed at each iteration of Algorithm~\ref{convexalgorithm} \cite{MSE_FD}.

We further investigated the impacts of the presence of clutter on the IBFD MU-MIMO communications in \figurename{~\ref{fig:combined}}a and different numbers of communications UEs on the MIMO radar target detection in \figurename{~\ref{fig:combined}}b. We plot the total MI of the IBFD MU-MIMO system $I_{\textrm{FD}}=\sum_{k=0}^{K-1}\bracket{\sum_{i=1}^\mathit{I}\alpha^\textrm{u}_i\mathit{I}^{\textrm{u}}_{i}\bracket{k}+\sum_{j=1}^\mathit{J}\alpha^\textrm{d}_j\mathit{I}^{\textrm{d}}_{j}\bracket{k}}$ versus CNRs in \figurename{~\ref{fig:combined}}a, where $I_{\textrm{FD}}$ remains relatively invariant when CNR increases from $10$ to $40$ dB and the proposed algorithm outperforms other precoding strategies for simulated CNRs. We performed $10^4$ Monte Carlo simulations with $D^{\textrm{u}}_{i}=2$ $(1)$, $D^{\textrm{d}}_{j}=1$ $(2)$ for UL (DL) ROC curves in Fig.~\ref{fig:combined}b. The number of DL UEs is 2 for the dashed curves, and the number of UL UEs is 2 for the solid curves.  which shows that the number of UL UEs is negatively correlated with the MIMO radar detection performance while the cooperation between the DL and the MIMO radar sustains the radar detection performance. \figurename{~\ref{fig:fd_si}} depicts the impact of FD SI attenuation level ranging from $-30$ to $0$ dB\footnote{Note that the NSP performance at $\sigma^2_{\textrm{SI}} = 0$ dB is different than Figure 6 in the companion paper Part I, which employs random initialization. The experiment in this paper used the deterministic initialization.} on the joint radar-communications co-design measured via $I_{\mathrm{CWSM}}$. As expected, the stronger the SI cancellation, the better the MRMC system outputs. Our proposed algorithm outperforms other precoders even at very low SI cancellation or high values of $\sigma^2_{\mathrm{SI}}$. 
\section{Summary}
\label{sec:conclusion}
We jointly designed UL/DL precoders, MIMO radar code matrix, and LRFs for both statistical MIMO radar and IBFD MU-MIMO communications operating in the same frequency bands. We proposed BCD-AP MRMC algorithm that ensures convergence and, as shown in the previous companion paper (Part I) \cite{liu2024codesigningpart1}, delivers performance benefits for both the radar and communications systems. This paper, through extensive experimentation, demonstrated the rapid convergence of the BCD-AP MRMC algorithm using two different initialization schemes. Our co-designed DL and radar systems exhibited robustness against substantial UL interference. Moreover, our optimized precoders and radar codes maintained stable DL and UL data rates as the CNR increased. Furthermore, we validated the robustness of our methods against imperfect CIR estimates and a wide range of SI attenuation levels. In summary, our proposed spectral co-design framework and the BCD-AP MRMC algorithm offer significant performance improvements for both the MIMO radar and IBFD MU-MIMO communications system. The following companion paper (Part III) \cite{liu2024codesigningpart3} considers distributed beamforming and tracking aspects of the same system.

\appendix

\section{Derivation of Gradients}
\label{app:grad}
Denote the complex gradient operator for a scalar real-valued function with a complex-valued matrix argument $f\paren{\mathbf{Z},\mathbf{Z^\ast}}$ as $\nabla_\mathbf{Z}f=\frac{\partial f}{\partial\mathbf{Z}^\ast}$. From the derivative formula $\frac{\partial}{\partial \mathbf{X}^\ast}\trace\paren{\mathbf{B}^\top\mathbf{X}^\dagger\mathbf{CXB}}=\mathbf{C}^\top\mathbf{XBB}^\top+\mathbf{CXBB}^\top $ \cite{MSE_FD}, the gradients of $\Xi_{\text{UL}}$ and $\Xi_{\text{DL}}$ w.r.t. $\PiB$, $\PBj$ and $\mathbf{a}\bracket{k}$ are
\begin{align}\label{GD_constraint}
\nabla_{\PiB}\Xi_{\textrm{UL}}&=2\HiBH\boldsymbol{\xi}_{\textrm{UL}}\mathbf{H}_{i,\B}\PiB-2\alpha^{\textrm{u}}_{i}\HiBH\UiBH\WiB,\nonumber \\
\nabla_{\PBj}\Xi_{\text{DL}}&=2\sum_{g=1}^{\mathit{J}}\mathbf{H}^\dagger_{\B,g}\boldsymbol{\xi}_{\textrm{d},g}\mathbf{H}_{\B,g}\PBj-2\alpha^{\textrm{d}}_j\HBjH\UBjH\WBj,\nonumber\\
\nabla_{\PBj}\Xi_{\text{UL}}&=2\HBBH\boldsymbol{\xi}_{\textrm{UL}}\bracket{k}\HBB\PBj, \nonumber\\
\nabla_{\PiB}\Xi_{\text{DL}}&=2\sum_{g=1}^\mathit{J}\mathbf{H}^\dagger_{i,g}\boldsymbol{\xi}_{\textrm{d},g}\bracket{k}\mathbf{H}_{i,g}\PiB,\nonumber\\
\nabla_{\mathbf{a}\bracket{k}}\Xi_{\text{UL}}&=2\HrBH\boldsymbol{\xi}_{\textrm{UL}}\bracket{k}\HrB\mathbf{a}\bracket{k},
\end{align}
and 
\begin{align}
\nabla_{\mathbf{a}\bracket{k}}\Xi_{\text{DL}}&=2\sum_{g=1}^{\mathit{J}}\HrgH\boldsymbol{\xi}_{\textrm{d},g}\bracket{k}\Hrg\mathbf{a}\bracket{k}, 
\end{align}
respectively, where $\boldsymbol{\xi}_{\textrm{UL}}\bracket{k}=\sum_{q=1}^{\mathit{I}}\alpha^\textrm{u}_q\UqBnH\WqB\UqB$ and  $\boldsymbol{\xi}_{\textrm{d},g}\bracket{k}=\alpha^\textrm{d}_g\mathbf{U}^\dagger_{\textrm{d},g}\bracket{k}$ $\mathbf{W}_{\textrm{d},g}\bracket{k}\mathbf{U}_{\textrm{d},g}\bracket{k}$.
The gradients of $\Xi_{\text{r}}$ w.r.t. $\PiB$, $\PBj$, and
$\mathbf{a}\bracket{k}$ are respectively shown as 
\begin{align}
\nabla_{\PiB}\Xi_{\mathrm{r}}&=2\sum_{n\rr=1}^{N\rr}\sum_{m=1}^{K}\mathrm{Re}\paren{\mathrm{\xi}_{\mathrm{r},n\rr}\paren{k,m}\boldsymbol{\Sigma}^{\paren{m,k}}_{i,n\rr}}\mathbf{P}_{\textrm{u},i}\bracket{m} \mathbf{d}_{\textrm{u},i}\bracket{m,n_\target-n_{\textrm{u}}}\mathbf{d}^\dagger_{\textrm{u},i}\bracket{k,n_{\textrm{t}}-n_{\textrm{u}}}\nonumber
\end{align}
\begin{align}
\nabla_{\PBj}\Xi_{\textrm{r}} &= 2\sum_{n\rr=1}^{N\rr}\sum_{m=1}^{K}\mathrm{Re}\paren{\mathrm{\xi}_{\mathrm{r},n\rr}\paren{k,m}\boldsymbol{\Sigma}^{\paren{m,k}}_{\mathrm{Bt},n\rr}} \sum_{g=1}^{J}\left\lbrace\mathbf{P}_{\textrm{d},g}\bracket{m}\mathbf{d}_{\mathrm{d},g}\bracket{m,0}\mathbf{d}^\dagger_{\textrm{d},j}\bracket{k,0}\right\rbrace\nonumber\\
&+2\sum_{n\rr=1}^{N\rr}\sum_{m=1}^{K}\mathrm{Re}\paren{\mathrm{\xi}_{\mathrm{r},n\rr}\paren{k,m}\boldsymbol{\Sigma}^{\paren{m,k}}_{\mathrm{Bm},n\rr}}\sum_{g=1}^{J}\mathbf{P}_{\textrm{d},g}\bracket{m}\mathbf{d}_{\mathrm{d},g}\bracket{m,n_\target-n_{\textrm{Bm}}}\mathbf{d}^\dagger_{\textrm{d},j}\bracket{k,n_{\textrm{t}}-n_{\textrm{Bm}}}\nonumber\\
&-2\sum_{n\rr=1}^{N\rr}\sum_{m=1}^{K}\mathrm{Re}\paren{\boldsymbol{\Sigma}^{\paren{m,k}}_{\mathrm{Bt},n\rr}}\mathbf{J}^\top_{\mathrm{B}}\mathbf{J}^\top_{\mathrm{h}}\bracket{m}\Wrnr\urk\mathbf{d}^\dagger_{\textrm{d}.j}\bracket{k,0},\nonumber
\end{align}
\begin{align}
\nabla_{\mathbf{a}\bracket{k}}\Xi_{\text{r}}&=
-2\sum_{n\rr=1}^{N\rr}\sum_{m=1}^{K}\mathrm{Re}\paren{\boldsymbol{\Sigma}^{\paren{m,k}}_{\mathrm{rt},n\rr}}\mathbf{J}^\top\rr\mathbf{J}^\top_{\mathrm{h}}\bracket{m}\Wrnr\urk\nonumber\\
&+2\sum_{n\rr=1}^{N\rr}\sum_{m=1}^K\bracket{\mathrm{Re}\paren{\mathrm{\xi}_{\mathrm{r},n\rr}\paren{k,m}\boldsymbol{\Sigma}^{\paren{m,k}}_{\mathrm{rt},n\rr}}+\mathrm{Re}\paren{\mathrm{\xi}_{\mathrm{r},n\rr}\paren{k,m}\boldsymbol{\Sigma}_{\textrm{c},n\rr}}}\mathbf{a}\bracket{m},\nonumber
\end{align}
\normalsize
where  $\mathrm{\xi}_{\mathrm{r},n\rr}\paren{k,m}=\alpha^{\textrm{r}}_{n\rr}\trace\braces{\mathbf{u}^\dagger_{\textrm{r},n\rr}\bracket{k}\mathbf{W}\rnr\mathbf{u}_{\textrm{r},n\rr}\bracket{m}}$.
The derivatives of $f\paren{\mathbf{X},\mathbf{X}^\ast}$ in \eqref{eq: Taylor} w.r.t. $\mathbf{X}^\ast$ are thus approximated as $\frac{\partial f}{\partial \mathbf{X}^\ast}=\frac{\partial }{\partial \mathbf{X}^\ast_{\mathrm{0}}}f\paren{\mathbf{X}_{\mathrm{0}},\mathbf{X}^\ast_{\mathrm{0}}}$ 
The chain rule for a scalar function $g\paren{\mathbf{U\paren{\mathbf{X},\mathbf{X}^\ast}},\mathbf{U}^\ast\paren{\mathbf{X},\mathbf{X}^\ast}}$ where $g$ is dependent on $\mathbf{X}^\ast$ through the matrix $\mathbf{U}$ is \cite{MSE_FD}
\begin{flalign}
\label{eq: chainrule}
\frac{\partial g}{\partial \mathbf{X}^\ast}=\frac{\trace\braces{\paren{\frac{\partial g }{\partial \mathbf{U}}}^\top\partial \mathbf{U}}}{\partial \mathbf{X}^\ast} + \frac{\trace\braces{\paren{\frac{\partial g }{\partial \mathbf{U}^\ast}}^\top\partial \mathbf{U}^\ast}}{\partial \mathbf{X}^\ast}.
\end{flalign}\normalsize
With 
\eqref{eq: chainrule} and 
$\partial\log\left|\mathbf{X}\right|=\trace\braces{\mathbf{X}^{-1}\partial\mathbf{X}}$ \cite{MSE_FD}, the derivatives of $\mathit{R}_{\textrm{u},i}\bracket{k}$ and $\mathit{R}_{\textrm{d},j}\bracket{k}$ w.r.t. $\PiB$, $\PBj$ based on their associated first order Taylor series expansions are 
\begin{align}
\nabla_{\PiB}\mathit{R}_{\textrm{u},i}\bracket{k}&=\HiBH\paren{\mathbf{R}^{\textrm{in}}_{\textrm{u},i}\bracket{k}}^{-1}\HiB\widetilde{\mathbf{P}}_{\textrm{u},i}\bracket{k}\widetilde{\mathbf{E}}^{\star}_{\textrm{u},i}\bracket{k},\nonumber\\ \nabla_{\PBj}\mathit{R}_{\textrm{d},j}\bracket{k}&=\mathbf{H}^\dagger_{\B,j}\paren{\mathbf{R}^{\textrm{in}}_{\textrm{d},j}\bracket{k}}^{-1}\HBj\widetilde{\mathbf{P}}_{\textrm{d},j}\bracket{k}\widetilde{\mathbf{E}}^{\star}_{\textrm{d},j}\bracket{k}, \nonumber\\
\nabla_{\mathbf{P}_{\textrm{u},i}\bracket{k}}\mathit{R}_{\textrm{u},q}\bracket{k}&=-\HiBH\Rinqin\HqB\PqB\mathbf{E}^\star_{\textrm{u},q}\bracket{k}\times\PqBH\HqBH\Rinqin\HiB\widetilde{\mathbf{P}}_{\textrm{u},i}\bracket{k}, ~q\neq i,\nonumber\\
\nabla_{\PBj}\mathit{R}_{\textrm{u},i}\bracket{k}&=-\HBB\Riniin\HiBH\PiB\mathbf{E}^\star_{\textrm{u},i}\bracket{k} \HiBH\PiBH\Riniin\HBB\widetilde{\mathbf{P}}_{\textrm{d},j},\nonumber\\
\nabla_{\PiB}\mathit{R}_{\textrm{d},j}\bracket{k}&=-\mathbf{H}^\dagger_{i,j}\Rinjin\HBj\PBj\mathbf{E}^{\star}_{\textrm{d},j}\bracket{k} \PBjH\HBjH\Rinjin\mathbf{H}_{i,j}\widetilde{\mathbf{P}}_{\textrm{u},i}\bracket{k},\nonumber\\
\nabla_{\PBj} 
\mathit{R}_{\textrm{d},g}\bracket{k}&=-\HBgH\Ringin\HBg\PBg\mathbf{E}^\star_{\textrm{d},g}\bracket{k} \PBgH\HBgH\Ringin\HBg\widetilde{\mathbf{P}}_{\textrm{d},j}\bracket{k},~g\neq j, \nonumber
\end{align}

where $\widetilde{\mathbf{E}}^\star_{\textrm{u},i}\bracket{k}\triangleq\EiBop\paren{\widetilde{\mathbf{P}}_{\textrm{u},i}\bracket{k}}$, $\widetilde{\mathbf{E}}^\star_{\textrm{d},j}\bracket{k}=\EBjop\paren{\widetilde{\mathbf{P}}_{\textrm{d},j}\bracket{k}}$. The derivatives of $\mathit{R}_{\textrm{u},i}\bracket{k}$ and $\mathit{R}_{\textrm{d},j}\bracket{k}$ w.r.t. $\mathbf{a}\bracket{k}$ are
$\nabla_{\mathbf{a}\bracket{k}}\mathit{R}_{\textrm{d},j}\bracket{k}=-\mathbf{H}^\dagger_{\textrm{r},j}\Rinjin$ $\PBj\HBj\mathbf{E}^\star_{\textrm{d},j}\bracket{k}\PBjH\HBjH\Rinjin\mathbf{H}_{\textrm{r},j}\mathbf{a}\bracket{k}$ and $\nabla_{\mathbf{a}\bracket{k}}\mathit{R}_{\textrm{u},i}\bracket{k}=-\HrBH\Riniin\PiB\HiB\mathbf{E}^{\star}_{\textrm{u},q}\bracket{k}\PiBH\HiBH$ $\Riniin\HrB\mathbf{a}\bracket{k}$.
	
	\bibliographystyle{elsarticle-num}
 \bibliography{SP_SI_Part1_v02}
	
\end{document}